\documentclass{article}
\usepackage{graphicx,color}
\usepackage{amsmath}
\usepackage{amsthm}
\usepackage{amssymb}
\usepackage{ulem}
\usepackage{natbib}

\usepackage{url}

\usepackage[margin=1.5in]{geometry}
\usepackage{comment}

\newtheorem{theorem}{Theorem}

\newtheorem{remark}{{\it Remark}}[section]
\newtheorem{coro}{Corollary}

\title{Bias Corrected Variance Stabilizing Transformation for Small Area Estimation}
\author{Masayo Y. Hirose, Malay Ghosh, and Mayumi Oka\\
Kyushu University, University of Florida, Institute of Statistical Mathematics}
\date{\empty}

\begin{document}
\maketitle

\begin{abstract}
Small area estimation models are typically based on the normality assumption of response variables. More recently, attention has been drawn to the transformation of the original variables to justify the assumption of normality. Variance stabilizing transformation of observation serves the dual purpose of reaching closer to normality, as well as known variance of the transformed variables in contrast to the assumption of known variances of the original variables, the latter needed to avoid non-identifiability. 
However, the existing literature on the topic ignores a certain bias introduced in the seemingly correct back transformation. 
The present paper rectifies this deficiency by introducing asymptotically unbiased empirical Bayes (EB) estimators of small area means. Mean squared errors (MSEs) and estimated MSEs of such estimators are provided. The theoretical results were accompanied with simulations and data analysis. A somewhat surprising phenomenon is a finding which connects one of our results to the natural exponential family quadratic variance function (NEF-QVF) family of distributions introduced by Morris (1982,1983). 
\end{abstract}

{\it Keywords:} Confidence Interval, Empirical Bayes estimation, Linear mixed model, Mean squared error estimation, NEF-QVF. 

\section{Introduction}

Small area estimation is gaining increasing popularity in recent years from both public and private sectors. It is now widely recognized that small area estimates need to be based on models linking the different areas. This is because sample sizes in the individual areas are often inadequate to provide adequate precision for direct estimates. 

Small area models, by and large, are linear, starting with the classic paper of \cite{fay1979} targeted to estimate per capita income for small places. 
Their model is essentially a mixed effect model which assumed normality of the errors. Over the years, there have been several extensions of this model, needed to accommodate analysis of specific data at hand. 

One of the issues that received attention a while ago is that the normality assumption of the errors may not always be justified when data are measured in the original scale and need some transformation. \cite{slud2006} initiated this and obtained explicit bias-adjusted estimators after back transformation of data from the logarithmic scale to the original scale. This was followed later for binary data using arcsin transformation by several authors (\cite{raghunathan2007}; \cite{casas2016}; \cite{franco2019}; \cite{hirose2023}; and \cite{hadam2024}). 

Variance stabilizing transformations considered in this paper serve the dual purpose of rendering transformed variables closer to normality as well as getting known sample variances instead of assuming them to be known as is customary to avoid non-identifiability.  

We are also considering the scenario where not just the number of small areas say $m$, is large, but the sample sizes in many of the small areas are also large or at least moderately large. 
This particular feature needs to be recognized when developing expressions for asymptotic mean squared errors and their estimators. 
We consider the asymptotic setting where $n_i$, the sample size in the $i$th small area is $O(N)$ for large $N$ with $N=\sum_i^m n_i$.  

\cite{sugasawa2017} considered dual-power transformation and obtained empirical Bayes estimators of parameters of interest. \cite{hadam2024} proposed empirical Bayes estimators which did not admit closed form solutions and suggested numerical methods for their evaluation.  
They also suggested mean squared error (MSE) estimation employing bootstrap methods while these authors applied back-transformed estimates of parameters of interest. It turns out though that in all the previously cited papers the back transformed estimates involved a certain bias.  This will be made more precise in the next section. The objective of our work is to rectify this deficiency and provide an asymptotically bias-corrected estimate up to a certain order. Moreover, we use survey weighted estimates of these means, as is customary in survey sampling. We introduce empirical Bayes estimates and study their asymptotic behavior, including asymptotically valid empirical Bayes confidence intervals. 

While developing our procedure, we found a surprisingly new result. For variance stabilizing transformation $g$, if we require $g^{\prime\prime}(\mu)/[g^{\prime}(\mu)]^3$ to be linear in $\mu$, (as needed by us), the necessary and sufficient condition for the same is that the population variance is at most a quadratic function of the population mean.  
We also show that such property is achieved when several specific transformations $g(\cdot)$, for example, linear function, square root, arc-sin, logarithm and arc-hyperbolic-sin transformations.

The outline of the remaining sections is as follows. 
In Section 2, we pose the problem of interest and propose the empirical Bayes estimators. In Section 3, we provide asymptotically second-order correct mean square expansion of such estimators, and also second-order corrected estimators of the MSE's. 
In Section 4, we provide second-order correct interval estimates of the small area means. Simulation study and Data analysis are given in Sections 5 and 6. Some final remarks are made in Section 7.

\section{Empirical Bayes Estimation}

Consider $m$ small areas (or domains) labelled $1,\cdots m$. 
Let $y_{ij} (j=1,\cdots n_i)$ denote the $j$-th unit of the ith small area. It is assumed that $E[y_{ij}]=\mu_i$ and $V(y_{ij})=\sigma_i^2$. 
The survey-weighted estimates of the $\mu_i$ (\cite{you2002}) are given by ${y}_i=\sum_{j =1}^{n_i} w_{ij}y_{ij}$ where $\sum_{j =1}^{n_i} w_{ij}=1$ with known $w_{ij}$.  
Thus $E(y_i)=\mu_i$ and $\sigma_{iw}=V(y_i)=\sigma_i^2\sum_{j =1}^{n_i} w_{ij}^2$ $(i=1,\cdots m)$. 
For secondary users of survey, the microdata $y_{ij}$ are not available, and inference needs to be made based on the area level data $y_i$ $(i=1\cdots m)$. 

Suppose now for better approximation to normality, $g$ is a variance stabilizing monotonically strictly increasing function. Following \cite{fay1979}, we introduce the mixed effects model, 
\begin{align}
\label{model}
g({y}_i)=\theta_i+e_i, \theta_i=x_i^{\prime}\beta+u_i \ (i=1,\cdots, m).
\end{align}

In the above, the $e_i$ and $u_i$ are mutually independent with $e_i\sim^{ind} N(0,D_i)$ and $u_i\sim^{iid} N(0,A)$.
To avoid nonidentifiability, the $D_i$ are assumed known. 

The customary approach is to estimate the $\mu_i$ by $g^{-1}(\hat \theta_i)$ or $E[g^{-1}(\theta_i)|g(y_i)]$, where the $\hat \theta_i$ are estimators of the $\theta_i$. For instance, if $g(y_i)=\log (y_i)$, one estimates $\mu_i$ by $\exp[\hat \theta_i]$ or $\exp[\hat \theta_i+\frac{1}{2}\hat V(\hat \theta_i)]$, where $\hat V(\hat \theta_i)$ is certain exact or asymptotic estimator of $V(\hat \theta_i)$. But here comes the bias, while $E[g(y_i)]=\theta_i, \mu_i=E[y_i]$ does not imply $g^{-1}(\theta_i)= \mu_i$. 

We take up this issue, which to our knowledge has not been addressed before, and proceed 
to develop new small area estimates under variance stabilizing transformation of the original scale. 
Assume $\sigma_{iw}^2=E[(y_i-\mu_i)^2]=O(n_i^{-1})$, $E[(y_i-\mu_i)^3]=O(n_i^{-3/2}), E[(y_i-\mu_i)^4]=O(n_i^{-2})$ and so on. Now we use the Taylor series expansion. 
\begin{align}
\label{eq2.25312}
\theta_i=E[g(y_i)|\theta_i]=g(\mu_i)+\frac{\sigma_{iw}^2}{2}g(\mu_i)^{\prime\prime}+O(n_i^{-3/2}). 
\end{align}
Thoughout this paper, we denote $\partial_{\mu_i}^{k} g(\mu_i) (k=1,2,3)$ as $g(\mu_i)^{\prime}, g(\mu_i)^{\prime\prime }$ and $g(\mu_i)^{\prime\prime\prime}$, respectively, and assume these are of order $O(1)$. 

Typically, with the above nonlinear complex function, it is hard to make an adequate back transformation. In order overcome this, we consider an initial a shift transformation, namely $\theta_i=g(\mu_i+\delta_i)$ with $\delta_i=O(n_i^{-1})$ and equate the same to the right hand side of \eqref{eq2.25312} up to the order of $O(n_i^{-1})$. 
Though not exact, a similar idea appears in \cite{raghunathan2007}. Also, with a one step Taylor expansion,

\begin{align}
g(\mu_i+\delta_i)=g(\mu_i)+\delta_ig(\mu_i)^{\prime}+O(n_i^{-2}).\label{eq3.25312}
\end{align}
This leads to the relation 
\begin{align}
\delta_i=\frac{\sigma_{iw}^2}{2}g(\mu_i)^{\prime\prime}[g(\mu_i)^{\prime}]^{-1}+O(n_i^{-3/2}).\label{eq4.25312}
\end{align}

Again using 
\begin{align*}
D_i=&V[g(y_i)|\theta_i]=V[g(\mu_i)+(y_i-\mu_i)g(\mu_i)^{\prime}|\theta_i]+O(n_i^{-3/2}),\\
=&\sigma_{iw}^2\left\{g(\mu_i)^{\prime}\right\}^2+O(n_i^{-3/2}).
\end{align*}

\noindent This leads to the identity 
\begin{align}
\delta_i=&\frac{D_i}{2}g(\mu_i)^{\prime\prime}[g(\mu_i)^{\prime}]^{-3}+O(n_i^{-3/2}).\label{eq5.25312}
\end{align}

\noindent We may also note here that since $g$ is variance stabilizing, $D_i$ does not depend on unknown parameters.

We want to consider the general situation when $\delta_i$ is approximately linear in $\mu_i$ i.e. 
$$\delta_i=\frac{D_{i}}{2}(a_{i} \mu_i+b_{i})+O(n_i^{-3/2}).$$
This leads to the differential equation 
\begin{align}
g^{\prime\prime}(\mu_i)/[g(\mu_i)^{\prime}]^3=a_i\mu_i+b_i,
\label{e1.240703}\end{align}
where $a_{i}$ and $b_{i}$ are some generic constant values of the order $O(1)$.

The above linearity assumption is not as ad-hoc as it sounds. It is clear that the variance stabilizing transformation is based essentially on a mean variance relationship. 
It turns out that for \eqref{e1.240703} to hold, the variance is at most a quadratic function of the mean. 
The following theorem makes it more precise.

\begin{theorem}
    Let $X$ be a random variable with mean $\mu$ and variance $\sigma^2$. 
    Consider a variance stabilizing transformation $g$ satisfying $[g(\mu)^{\prime}]^2\sigma^2=k$, a positive constant.   
    Then if $\sigma^2=c_0+c_1\mu+c_2\mu^2$ where $c_0$, $c_1$ and $c_2$ are not all zeroes, $g^{\prime\prime}(\mu)/[g^{\prime}(\mu)]^3$ is either linear in $\mu$ or is a constant. In particular, $a_i=-c_2/k$ and $b_i=-c_1/(2k)$.
\end{theorem}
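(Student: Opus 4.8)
The plan is to start from the defining identity of the variance stabilizing transformation, $[g(\mu)^{\prime}]^{2}\sigma^{2}=k$, and simply differentiate it in $\mu$. Since $g$ is strictly increasing we have $g^{\prime}(\mu)>0$, so all the divisions below are legitimate, and $\sigma^{2}$ is being viewed as a (twice differentiable) function of $\mu$ along the family at hand. Differentiating both sides gives
\[
2g^{\prime}(\mu)g^{\prime\prime}(\mu)\sigma^{2}+[g^{\prime}(\mu)]^{2}\,\frac{d\sigma^{2}}{d\mu}=0,
\]
hence $g^{\prime\prime}(\mu)=-\tfrac{1}{2}\,g^{\prime}(\mu)\,\sigma^{-2}\,\dfrac{d\sigma^{2}}{d\mu}$.

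Next I would form the ratio of interest by dividing the last relation by $[g^{\prime}(\mu)]^{3}$ and using $[g^{\prime}(\mu)]^{2}=k/\sigma^{2}$ once more, so that everything collapses to
\[
\frac{g^{\prime\prime}(\mu)}{[g^{\prime}(\mu)]^{3}}=-\frac{1}{2\,\sigma^{2}\,[g^{\prime}(\mu)]^{2}}\,\frac{d\sigma^{2}}{d\mu}=-\frac{1}{2k}\,\frac{d\sigma^{2}}{d\mu}.
\]
Thus, up to the constant factor $-1/(2k)$, the quantity $g^{\prime\prime}(\mu)/[g^{\prime}(\mu)]^{3}$ is nothing but the derivative of the variance with respect to the mean; in particular its linearity in $\mu$ is equivalent to $\sigma^{2}$ being at most quadratic in $\mu$.

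Finally I would substitute the assumed form $\sigma^{2}=c_{0}+c_{1}\mu+c_{2}\mu^{2}$, so that $d\sigma^{2}/d\mu=c_{1}+2c_{2}\mu$, and read off
\[
\frac{g^{\prime\prime}(\mu)}{[g^{\prime}(\mu)]^{3}}=-\frac{c_{1}+2c_{2}\mu}{2k}=-\frac{c_{2}}{k}\,\mu-\frac{c_{1}}{2k},
\]
which is linear in $\mu$ with coefficients $a_{i}=-c_{2}/k$ and $b_{i}=-c_{1}/(2k)$, degenerating to the constant $-c_{1}/(2k)$ precisely when $c_{2}=0$.

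As for difficulties, the computation is essentially a one-line differentiation, so there is no real obstacle: the only points that deserve a word of care are (i) justifying that $\sigma^{2}$, as a function of $\mu$ along the relevant family, is differentiable so that the chain rule applies, and (ii) using strict monotonicity of $g$ to keep $g^{\prime}$ away from $0$. If one also wants the converse implication suggested in the surrounding discussion — that linearity of $g^{\prime\prime}/[g^{\prime}]^{3}$ forces $\sigma^{2}$ to be at most quadratic in $\mu$ — it follows immediately from the displayed identity by integrating $d\sigma^{2}/d\mu=-2k(a_{i}\mu+b_{i})$.
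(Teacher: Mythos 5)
Your proof is correct and arrives at the same identity $g^{\prime\prime}(\mu)/[g^{\prime}(\mu)]^{3}=-\frac{1}{2k}(c_{1}+2c_{2}\mu)$ by essentially the same calculus, the only difference being that you differentiate the defining relation $[g^{\prime}(\mu)]^{2}\sigma^{2}=k$ implicitly rather than first solving $|g^{\prime}(\mu)|=\{k/(c_{0}+c_{1}\mu+c_{2}\mu^{2})\}^{1/2}$ and differentiating each sign branch separately as the paper does. This is a minor stylistic improvement: it handles both signs of $g^{\prime}$ at once and makes the intermediate identity $g^{\prime\prime}/[g^{\prime}]^{3}=-(2k)^{-1}\,d\sigma^{2}/d\mu$ explicit, which also clarifies the converse direction discussed in the paper's Remark 2.2.
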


\begin{proof}
$$|g^{\prime}(\mu)|=\left\{\frac{k}{c_0+c_1\mu+c_2\mu^2}\right\}^{1/2}.$$
If $$g^{\prime}(\mu)=\left\{\frac{k}{c_0+c_1\mu+c_2\mu^2}\right\}^{1/2},$$
$$g^{\prime\prime}(\mu)=-\frac{k^{1/2}}{2(c_0+c_1\mu+c_2\mu^2)^{3/2}}(c_1+2c_2\mu).$$
Then
$$g^{\prime\prime}(\mu)/[g^{\prime}(\mu)]^3=-\frac{1}{2k}(c_1+2c_2\mu),$$ 
a linear function of $\mu$ or is a constant if $c_2=0$. 
Similarly, if $g^{\prime}(\mu)=-\left\{\frac{k}{c_0+c_1\mu+c_2\mu^2}\right\}^{1/2},$ $$g^{\prime\prime}(\mu)/[g^{\prime}(\mu)]^3=-\frac{1}{2k}(c_1+2c_2\mu),$$ 
\end{proof}

\begin{remark}
\label{remark2.1}
    \cite{morris1982,morris1983} introduced the natural exponential family quadratic variance function (NEF-QVF) family of distributions and characterized distributions where the stated $\mu$, $\sigma^2$ relation is satisfied. The six root distributions are Bernoulli, Posson, negative Binomial, Normal, Gamma and generalized hyperbolic secant (GHS).
    We find expressions for $g^{\prime\prime}(\mu)/[g^{\prime}(\mu)]^3$ for the any of NEF-QVF family of distributions. 
    \begin{description}
    \item [I]Bernoulli ($p$): From $\sigma^2=p(1-p), \mu=p$, we have $c_0=0,c_1=1,c_2=-1$. 
    Then $$g^{\prime\prime}(\mu)/[g^{\prime}(\mu)]^3\propto 2p-1.$$
    \item [II]Poisson ($\lambda$):From $\mu=\lambda$, $c_0=c_2=0,c_1=1$. Then
    $$g^{\prime\prime}(\mu)/[g^{\prime}(\mu)]^3\propto -\frac{1}{2}$$
    \item [III]Negative Binomial: $P(X=s)=\begin{pmatrix}
    s+r-1\\
    s 
    \end{pmatrix} p^r (1-p)^s,s=0,1,2,...$, $r$ known. 
    $$\mu=r(1-p)/p, \sigma^2=r(1-p)/p^2=r\frac{1-p}{p}\{1+(1-p)/p\}=\mu (1+\mu/r).
    $$
    Then $c_0=0, c_1=1, c_2=1/r$, 
    $g^{\prime\prime}(\mu)/[g^{\prime}(\mu)]^3\propto -\frac{1}{2}(1+2\mu/r).$
    \item [IV]Normal (0,1): $c_0=1,c_1=c_2=0,g^{\prime\prime}(\mu)/[g^{\prime}(\mu)]^3=0$. 
    \item [V]Gamma: $f(x)=\exp(-x/\theta)\frac{x^{r-1}}{\theta^r \Gamma(r)}$, $r(>0)$ known;
    $\mu=r\theta, \sigma^2=r\theta^2=\mu^2/r$. 
    Then $c_0=c_1=0,c_2=1/r.$ $g^{\prime\prime}(\mu)/[g^{\prime}(\mu)]^3\propto -\frac{1}{r}\mu$.
    
    \item [VI] Generalized hyperbolic secant (GHS): $f(x)=\exp(\theta x)\cos(\theta)/[2\cosh(\pi x/2)]$.
    $\mu=\tan \theta, \sigma^2=1+\tan^2(\theta), c_0=c_2=1, c_1=0$.
    $g^{\prime\prime}(\mu)/[g^{\prime}(\mu)]^3\propto -\mu$.
    \end{description}
It is true, though, that the result of the theorem goes beyond the NEF-QVF family of distributions. Consider for example $X\sim lognormal (\theta,\phi^2)$, $\phi(>0)$ known.
Then $\mu=\exp(\theta+\phi^2/2), \sigma^2=\mu^2[\exp(\phi^2)-1]=\mu^2c_2$. 
$g^{\prime\prime}(\mu)/[g^{\prime}(\mu)]^3\propto -[\exp(\phi^2)-1]\mu$.
\end{remark}

\begin{remark}
In order to see a potential converse of this theorem, suppose that $$g^{\prime\prime}(\mu)/[g^{\prime}(\mu)]^3=d_1+d_2\mu.$$
On integration, $$-\frac{1}{2[g^{\prime}(\mu)]^2}=\frac{d_0}{2}+d_1 \mu+\frac{d_2}{2}\mu^2.$$
leading to $$[g^{\prime}(\mu)]^2=-1/(d_0+2d_1 \mu+d_2\mu^2).$$
In order that this relation holds for all $\mu$, we must need $d_0+2d_1 \mu+d_2\mu^2<0$ for all $\mu$. 
Thus if $d_1=d_2=0$, we need $d_0<0$. If $d_2=0$, we need $d_0+2d_1 \mu<0$ for all $\mu$. 
Finally, if $d_2\neq 0$, we need $d_0d_2<d_1^2$. 
\end{remark}

When the linear structure holds, 
\begin{align*}
     \theta_i=&g(\mu_i+\delta_i)=g\left(\mu_i+\frac{D_i}{2}(a_i\mu_i+b_i)\right)+O(n_i^{-3/2}),\\
     =&g\left(\mu_i(1+\frac{a_iD_i}{2})+\frac{D_i}{2}b_i\right)+O(n_i^{-3/2}).
\end{align*}
Thus 
\begin{align}
\mu_i=\frac{1}{1+a_iD_i/2}\{g^{-1}(\theta_i)-D_ib_i/2\}+O(n_i^{-3/2}).\label{eq7.25312}
\end{align}

\noindent We also note that the linearity assumption as given in \eqref{e1.240703} also makes solving problems with respect to $\mu_i$ much easier after back transformation. 

Accordingly, if the Bayes estimator of $\mu_i$ is given by 
\begin{align}
    \hat \mu_i^{B}=
    \frac{1}{1+a_iD_i/2}\{\tilde \mu_{i*}^{B}-D_ib_i/2\}+O(n_i^{-3/2}), \label{eq8.25312}
 \end{align}
 where $\tilde \mu_{i*}^{B}=E[g^{-1}(\theta_i)|g(y_i)]$.

\begin{remark} 
\label{remark2.3}

For the Bernoulli example, one has the identity 
$$\sin^{-1}(\sqrt{p})=\frac{\pi}{4}+\frac{1}{2}\sin^{-1}(2p-1).$$ If instead of $\sin^{-1}(\sqrt{p})$, we use $g(p)=\sin^{-1}(2p-1)$ again we get 
$$\frac{g(p_i)^{\prime\prime}}{(g(p_i)^{\prime})^3}=\frac{1}{2}(2p_i-1)\propto 2p_i-1.$$
In this case, $a_i=1, b_i=-1/2$, and $D_i=\sum_j^{n_i}w_{ij}^2+O(n_i^{-3/2})$. Hence, 
$$    \hat \mu_i^{B}=
    \frac{1}{1+\sum_j^{n_i}w_{ij}^2/2}\{\tilde \mu_{i*}^{B}+\sum_j^{n_i}w_{ij}^2/4\}+O(n_i^{-3/2}). 
$$
\end{remark}

In practice, the parameters $\beta$ and $A$ are unknown. So we will use instead the empirical Bayes estimator
\begin{align}
    \hat \mu_i^{EB}=
    \frac{1}{1+a_iD_i/2}\{\tilde \mu_{i*}^{EB}-D_ib_i/2\}+O(n_i^{-3/2}), \label{eq9.25312}
 \end{align}

Moreover, while we consider the asymptotic setting $n_i=O(N)$ for large $N$, this asymptotic setting achieves design consistency of $\hat \mu_i^{EB}$, like the direct estimator $y_{i}$. This is because for large $N$,

\begin{align*}
    \hat \mu_i^{EB}\approx 
    \tilde \mu_{i*}^{B}\approx g^{-1}(g(y_i))=y_i, 
 \end{align*}

\section{Bias and MSE Estimation of Empirical Bayes Estimators}

The linearity assumption \eqref{e1.240703} also yields the following helpful results for approximating and estimating the MSE of the empirical Bayes estimator $\hat \mu_i^{EB}$. 
We first list the regularity conditions needed to establish several theorems.  

\subsection*{The regularity conditions:}
\begin{description}
\item [R1] $rank(X)=p$ is bounded for large $m$;
\item [R2] $x_i^{\prime}(X^{\prime}X)^{-1}x_i=O(m^{-1})$ for large $m$;
\item [R3] The sampling variances $D_i=O(n_i^{-1})$ for large $n_i$, $A\in (0,\infty)$;
\item [R4] The transformed function $g(\cdot)$ is thrice continuous differentiable, and the $k$-th derivative of $g$ are uniformly bounded for $k=1,2,3$.   
\item [R5] The estimator of $A$ satisfies that $E[(\hat A-A)^{j}]=O(m^{-1})$ and $E[(\hat A-A)^{4}]=O(m^{-2})$ for $j=1,2$. 
\end{description}

\noindent We now prove Theorem \ref{mse.app1}. Throughout this paper, we assume that $a_i$ and $b_i$ are known.
\begin{theorem}
\label{mse.app1}
Under the regularity conditions R1-R5, we have, for large $N$, 
\begin{description}
\item [(i)] $E(\hat \mu_i^{EB}-\mu_i)=\frac{1}{1+a_iD_i/2}E(\tilde \mu_{i*}^{EB}-\mu_{i*})+o(n_i^{-1})=O(N^{-1})$,
where $\mu_{i*}=g^{-1}(\theta_i)$. 
\item [(ii)] Let $MSE_i$ denote the MSE of $\hat \mu_i^{EB}$ and let $\tilde \mu_{i*}^{EB}=\tilde \mu_{i*}^{B}(\hat \lambda)$, where $\lambda=(\beta,A)$. $MSE_{i,app*}$ is the second-order approximation of $MSE_{i*}:=E[(\tilde \mu_{i*}^{EB}-\mu_{i*})^2]$, satisfying $MSE_{i,app*}=E[(\tilde \mu_{i*}^{EB}-\mu_{i*})^2]+o(N^{-1})$. Then 
\begin{align*}
MSE_i&:=E[(\hat \mu_{i}^{EB}-\mu_{i})^2]\\
&=\frac{1}{(1+a_iD_i/2)^2}{MSE_{i,app*}}+o(N^{-2})\\
&=\frac{1}{(1+a_iD_i/2)^2}{MSE_{i*}}+o(N^{-1})
\end{align*}
\item [(iii)] Let $\hat M_i$ define $$\hat M_i:=\frac{1}{(1+a_iD_i/2)^2}\hat M_{i*}$$ 
with the second-order unbiased MSE estimator $\hat M_{i*}$ for MSE of $\tilde \mu_{i*}^{EB}$ satisfying $E[\hat M_{i*}-MSE_{i*}]=o(N^{-1})$. Then
\begin{align*}
E[\hat M_i-MSE_i]
&=o(N^{-1}).
\end{align*}
\end{description}
\end{theorem}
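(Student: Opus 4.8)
The plan is to reduce each of the three parts to the corresponding statement about $\tilde\mu_{i*}^{EB}$ viewed as an estimator of $\mu_{i*}=g^{-1}(\theta_i)$, using the linearity assumption \eqref{e1.240703} only through a single exact algebraic identity. Throughout I read the asymptotic regime so that $n_i^{-1}=O(N^{-1})$, hence $O(n_i^{-3/2})=o(N^{-1})$; write $c_i:=1+a_iD_i/2$, so that by R3 one has $c_i^{-1}=1+O(D_i)=1+O(N^{-1})$. Subtracting \eqref{eq7.25312} from \eqref{eq9.25312}, the back-transformation constants $D_ib_i/2$ cancel and the two $c_i^{-1}$ factors combine, leaving
\[
\hat\mu_i^{EB}-\mu_i=c_i^{-1}\bigl(\tilde\mu_{i*}^{EB}-\mu_{i*}\bigr)-r_i,\qquad r_i=O(n_i^{-3/2})=o(N^{-1}),
\]
where $r_i$ is the Taylor remainder from \eqref{eq7.25312}, deterministic (or $o(N^{-1})$ in every moment) under R4. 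Every conclusion then follows by applying $E[\,\cdot\,]$ or $E[(\,\cdot\,)^2]$ to this identity and quoting the matching fact about $\tilde\mu_{i*}^{EB}$.

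For (i), taking expectations gives the first displayed equality at once, the $o(n_i^{-1})$ term being $-r_i$; it remains to show $E[\tilde\mu_{i*}^{EB}-\mu_{i*}]=O(N^{-1})$. I would introduce the Bayes quantity $\tilde\mu_{i*}^{B}=E[g^{-1}(\theta_i)\mid g(y_i)]$: by the tower property $E[\tilde\mu_{i*}^{B}-\mu_{i*}]=0$ exactly, so the required bias equals $E[\tilde\mu_{i*}^{EB}-\tilde\mu_{i*}^{B}]$. The decisive point is that, because $g$ is variance stabilizing, $D_i$ is known and $O(N^{-1})$, so the shrinkage factor $B_i=D_i/(A+D_i)=O(N^{-1})$; writing $\tilde\mu_{i*}^{B}=h(g(y_i);\beta,A)$ and differentiating under the integral, a normal covariance (Stein) identity shows that $\partial h/\partial\beta$ and $\partial h/\partial A$, and their second derivatives, are $O(B_i)=O(N^{-1})$ times prefactors that are bounded by R4. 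A two-term Taylor expansion of $\tilde\mu_{i*}^{EB}-\tilde\mu_{i*}^{B}$ about $(\beta,A)$ then combines $\nabla_\lambda h=O_p(N^{-1})$ with $E\|\hat\lambda-\lambda\|^{2}=O(m^{-1})$ (from R1, R2, R5) via Cauchy--Schwarz to bound the first-order contribution by $O(N^{-1}m^{-1/2})$, while the Hessian term and the third-order remainder are $O(N^{-1}m^{-1})$ and smaller, using $E[(\hat A-A)^{4}]=O(m^{-2})$ of R5; hence $E[\tilde\mu_{i*}^{EB}-\tilde\mu_{i*}^{B}]=O(N^{-1})$. Multiplying by $c_i^{-1}=1+O(N^{-1})$ and subtracting $r_i=o(N^{-1})$ preserves the $O(N^{-1})$ bound.

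For (ii), I would square the identity,
\[
(\hat\mu_i^{EB}-\mu_i)^2=c_i^{-2}\bigl(\tilde\mu_{i*}^{EB}-\mu_{i*}\bigr)^2-2c_i^{-1}r_i\bigl(\tilde\mu_{i*}^{EB}-\mu_{i*}\bigr)+r_i^2,
\]
and take expectations. The leading term is $c_i^{-2}MSE_{i*}$. Since $MSE_{i*}=O(N^{-1})$ (its leading part being the Bayes risk $\approx(1-B_i)D_i\,[(g^{-1})'(\theta_i)]^2$), Cauchy--Schwarz gives $E|\tilde\mu_{i*}^{EB}-\mu_{i*}|\le MSE_{i*}^{1/2}=O(N^{-1/2})$, so the cross term is $O(r_i)\,O(N^{-1/2})=o(N^{-2})$ and $E[r_i^2]=o(N^{-2})$; hence $MSE_i=c_i^{-2}MSE_{i*}+o(N^{-2})$. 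The last line is then immediate from $c_i^{-2}=1+O(N^{-1})$ and $MSE_{i*}=O(N^{-1})$, and the first line follows upon substituting the second-order approximation $MSE_{i,app*}$ for $MSE_{i*}$ with its stated accuracy. The construction of $MSE_{i,app*}$ itself --- the leading-term-plus-two-corrections ($g_{1i*}+g_{2i*}+g_{3i*}$-type) decomposition of the Bayes risk together with the $\hat\beta$- and $\hat A$-estimation corrections --- is the standard Prasad--Rao/Datta--Lahiri analysis for the smooth functional $g^{-1}(\theta_i)$ in the Fay--Herriot model, and I would take it as given.

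Part (iii) is then immediate: $\hat M_i-MSE_i=c_i^{-2}\hat M_{i*}-MSE_i$, so the assumed second-order unbiasedness $E[\hat M_{i*}-MSE_{i*}]=o(N^{-1})$ combined with $MSE_i=c_i^{-2}MSE_{i*}+o(N^{-1})$ from (ii) and $c_i^{-2}=O(1)$ yields
\[
E[\hat M_i-MSE_i]=c_i^{-2}\,E[\hat M_{i*}-MSE_{i*}]+o(N^{-1})=o(N^{-1}).
\]
The one genuinely delicate step is the bias bound in (i): showing that substituting $\hat\lambda$ into the posterior mean costs only $O(N^{-1})$ and not the naive $O(m^{-1/2})$. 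This rests on the $O(D_i)$ sensitivity of $\tilde\mu_{i*}^{B}$ to $\lambda$, which is precisely what the known-$D_i$, variance-stabilizing, linearity setup buys us, and --- to be fully rigorous --- on the customary but tedious uniform-integrability control of the Taylor remainders afforded by R4 and R5.
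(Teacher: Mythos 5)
Your overall reduction is the paper's own: the single linear back-transformation identity $\hat\mu_i^{EB}-\mu_i=(1+a_iD_i/2)^{-1}(\tilde\mu_{i*}^{EB}-\mu_{i*})+O(n_i^{-3/2})$ obtained from \eqref{eq7.25312} and \eqref{eq9.25312}, followed by taking first and second moments; your part (iii) coincides with the paper's, and your part (i) is in fact more explicit than the paper, which simply invokes $E(\tilde\mu_{i*}^{EB}-\tilde\mu_{i*}^{B})=O(m^{-1})$ from standard EB theory where you sketch a sensitivity-in-$\lambda$ (order $O(D_i)$) argument.

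The genuine gap is in part (ii), in the cross term. You bound $E\bigl[r_i(\tilde\mu_{i*}^{EB}-\mu_{i*})\bigr]$ by $O(r_i)\cdot E|\tilde\mu_{i*}^{EB}-\mu_{i*}|=O(N^{-3/2})\cdot O(N^{-1/2})$, which is $O(N^{-2})$, not $o(N^{-2})$ as you claim; hence the middle display $MSE_i=(1+a_iD_i/2)^{-2}MSE_{i,app*}+o(N^{-2})$ does not follow from your argument (the final $o(N^{-1})$ line and part (iii) are unaffected). The paper evaluates rather than bounds this term: treating the $O(n_i^{-3/2})$ remainder as (essentially) nonrandom, the cross term is $O(n_i^{-3/2})\cdot E[\tilde\mu_{i*}^{EB}-\mu_{i*}]=O(n_i^{-3/2})\cdot E[\tilde\mu_{i*}^{EB}-\tilde\mu_{i*}^{B}]=O(m^{-1}n_i^{-3/2})=o(N^{-2})$, i.e.\ it reuses the first-moment (bias) bound underlying part (i) instead of Cauchy--Schwarz on the absolute first moment, which loses a factor of order $N^{1/2}$. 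Since you already establish that bias bound in your part (i), the repair is immediate, but as written the step asserting $o(N^{-2})$ is false. (You also inherit the paper's own looseness in replacing $MSE_{i*}$ by $MSE_{i,app*}$ at $o(N^{-2})$ accuracy when only $o(N^{-1})$ accuracy is assumed in the statement; the paper's remark following the theorem indicates the sharper accuracy required for that refinement.)
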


\begin{proof}
\noindent {\bf (i)} The equations \eqref{eq8.25312}-\eqref{eq9.25312} yield, under the regularity conditions; 
\begin{align}
E(\hat \mu_i^{EB}-\mu_i)=&E(\hat \mu_i^{EB}-\hat \mu_i^B)+E(\hat \mu_i^{B}-\mu_i),\notag\\
=&\frac{1}{1+a_iD_i/2}E(\tilde \mu_{i*}^{EB}-\tilde \mu_{i*}^B)+o(n_i^{-1}),\label{eq2.250402}\\ 
=&O(m^{-1})+o(n_i^{-1}). \notag
\end{align}

\noindent It holds that $O(n_i^{-1})=O(N^{-1})$ and $O(m^{-1})=O(N^{-1})$ in our asymptotic setting for large $N$. 
This proves (i). 

\noindent {\bf (ii)} 
Using the equation \eqref{eq2.250402}, 
\begin{align*}
MSE_i&=\frac{1}{(1+a_iD_i/2)^2}E[(\tilde \mu_{i*}^{EB}-\mu_{i*})^2]+\frac{2}{1+a_iD_i/2}E(\tilde \mu_{i*}^{EB}-\tilde \mu_{i*}^B)O(n_i^{-3/2})+O(n_i^{-3}),\\
&=\frac{1}{(1+a_iD_i/2)^2}E[(\tilde \mu_{i*}^{EB}-\mu_{i*})^2]+O(m^{-1}n_i^{-3/2})+O(n_i^{-3}),\\
&=\frac{1}{(1+a_iD_i/2)^2}{MSE_{i,app*}}+o(N^{-2}),\\
&=\frac{1}{(1+a_iD_i/2)^2}{MSE_{i*}}+o(N^{-1}).
\end{align*}

\noindent {\bf (iii)} 
From (ii) and the definitions of $\hat M_i$ and $MSE_{i,app*}$, we have, for large $N$, 

\begin{align*}
E[\hat M_i-MSE_i]&=\frac{1}{(1+a_iD_i/2)^2}E[\hat M_{i*}-MSE_{i,app*}]+o(N^{-2}), \\
&=\frac{1}{(1+a_iD_i/2)^2}E[\hat M_{i*}-MSE_{i*}]+o(N^{-1}),\\ 
&=o(N^{-1}). 
\end{align*}
\end{proof}

\noindent Note that in this study, we define the second-order unbiased MSE estimator $\hat{M}_i$ for large $N$ satisfying 
$$E[\hat {M}_i-MSE_i]=o(N^{-1}).$$
Incidentally, the plugged-in estimator of $MSE_{i,app*}$ suffices as $\hat M_i$ in our asymptotic setting where $n_i=O(N)$ because $MSE_{i,app*}=O(N^{-1})$ often holds. Thus, any bias correction is not needed in our asymptotic setting. 

Nevertheless, if $MSE_{i,app*}$ and $\hat M_{i*}$ satisfy that $MSE_{i,app*}=E[(\tilde \mu_{i*}^{EB}-\mu_{i*})^2]+o(N^{-2})$ and $E[\hat M_{i*}-MSE_{i,app*}]=o(N^{-2})$, then $\hat M_{i}$ gets more efficiency for large $N$. 
Luckily, several MSE estimators $\hat M_i^{*}$ may be helpful, which have already been developed for several specific transformations $g(\cdot)$ (\cite{slud2006}; \cite{ghosh2022}; \cite{hirose2023}). 

\section{Confidence Interval}

A confidence interval is also essential for predicting the quantity of interest $\mu_i$. 
One may use the naive method as follows:  
\begin{align}
\label{e1.250601}
I^{naive}_i=[y_i\pm z_{\alpha/2} \sqrt{\hat \sigma_{iw}}]
\end{align} 
However, with the above, there is no guarantee of achieving second-order corrected confidence interval. Also, situation gets more complex when $\sigma_{i}$ is related to $\mu_i$ as pointed out earlier. 
The second-order corrected confidence interval has been developed in small-area estimation, which achieves a nominal coverage probability up to the order of $O(m^{-1})$ for the untransformed case ($g$ being the identity function) with the asymptotic setting for large $m$ with fixed $n_i$. 
There exist certain methods for creating such intervals for untransformed cases (\cite{datta2002}; \cite{sasase2005}; \cite{hall2006}; \cite{chatterjee2008}; \cite{yoshimori2014second}; \cite{diao2014}; 7\cite{hirose2017}).
\cite{casas2016} and \cite{hadam2024} considered the transformed confidence intervals for arc-sin transformed data $\sin^{-1}(\sqrt{y_i})$. \cite{franco2019} investigated the confidence interval for proportions in complex sample surveys, including the arc-sin root transformed data. 
For example, the following interval $I^{CCL}_i$ was mentioned in \cite{casas2016}. 

\begin{align}
\label{CCL}
I^{CCL}_i=\Big[\sin^2(\hat \theta_i^{EB}+q_{i;\alpha/2} \sqrt{\hat g_{1i}}),\sin^2(\hat \theta_i^{EB}+q_{i;1-\alpha/2} \sqrt{\hat g_{1i}})\Big],
\end{align}
where $q_{i;\alpha/2}$ and $q_{i;1-\alpha/2}$ are the $\alpha/2$ and $1-\alpha/2$ quantiles of the bootstrap approximating distribution of the root $(\theta_i-\hat \theta_i^{EB})/\sqrt{g_{1i}(\hat A)}$ given in \cite{chatterjee2008}, respectively, where $\hat \theta_i^{EB}$ is the empirical best linear unbiased predictor (EBLUP) of $\theta_i$ and $g_{1i}(A)=AD_i/(A+D_i)$.  
Additionally, \cite{hadam2024} proposed the percentile parametric bootstrap method. However, they did not provide any theoretical justification.

Unfortunately, these methods have been constructed under the model $\mu_i=g_i^{-1}(\theta_i)$. As far as we know, no one considered the second-order corrected confidence interval under the one proposed in \eqref{eq7.25312} at least for small area estimation. 
Hereafter, we redefine the desired second-order corrected confidence interval $\hat I_i$ for large $N$, such as 
\begin{align*}
    P(\mu_i \in \hat I_i)=1-\alpha+O(N^{-3/2}). 
\end{align*}

\noindent Therefore, this section seeks such a confidence interval to achieve that desired property under \eqref{eq7.25312}.  

To this end, we now get back to the existing confidence intervals $\hat I_i$ for untransformed data such that $P(\theta_i \in \hat I_i)=1-\alpha+O(N^{-3/2})$. We denote the direct and the empirical Bayes confidence intervals as $\hat I_{i}^{D}=g(y_i)\pm z_{\alpha/2}\sqrt{D_i}$ and $\hat I_i^{EB}=[\hat \theta_i^{EB}+q_{i;\alpha/2} \sqrt{\hat M_{i}},\hat \theta_i^{EB}+q_{i;1-\alpha/2} \sqrt{\hat M_{i}}]$, respectively, with some quantiles $q_{i;\alpha/2}$ and $q_{i;1-\alpha/2}$, and some estimators $\hat M_i$ of uncertainty of $\hat \theta_{EB}$ which have developed so far for untransformed data.  
We then establish the following theorem and corollary: 

\begin{theorem}
\label{th3}
Under the regularity conditions R1-R5, we have for large $N$,
    $$P\left(\mu_i\in \frac{1}{1+a_iD_i/2}\{g^{-1}(\hat I_{i})-D_ib_i/2\}\right)=1-\alpha+O(N^{-3/2}).$$
\end{theorem}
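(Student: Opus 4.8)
The plan is to exploit the fact that the map
$$T(t):=\frac{1}{1+a_iD_i/2}\,\bigl\{g^{-1}(t)-D_ib_i/2\bigr\}$$
is, for $N$ large, a strictly increasing continuously differentiable bijection on the region of interest: by R3 the factor $1+a_iD_i/2$ is eventually positive, and $g$, hence $g^{-1}$, is strictly increasing. Thus $T$ carries $\hat I_i$ onto the interval in the statement, and
$$\Bigl\{\mu_i\in\tfrac{1}{1+a_iD_i/2}\{g^{-1}(\hat I_i)-D_ib_i/2\}\Bigr\}=\bigl\{T^{-1}(\mu_i)\in\hat I_i\bigr\}.$$
Applying the same monotone map to $\theta_i$ gives $T(\theta_i)$, so $P\bigl(T(\theta_i)\in T(\hat I_i)\bigr)=P(\theta_i\in\hat I_i)=1-\alpha+O(N^{-3/2})$ \emph{exactly}; the whole problem thus reduces to replacing the random centre $T(\theta_i)$ by $\mu_i$, equivalently $\theta_i$ by $T^{-1}(\mu_i)$, inside the coverage probability.

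I would then quantify that replacement from results already in hand. Equation \eqref{eq7.25312} --- which is exactly where \eqref{e1.240703} enters --- gives $\mu_i=T(\theta_i)+O(n_i^{-3/2})$, and applying $T^{-1}$ (bounded derivative, by R4) yields $T^{-1}(\mu_i)=\theta_i+\eta_i$ with $|\eta_i|\le C\,N^{-3/2}$, uniformly in $i$ under R1--R4 and $n_i=O(N)$. Passing to the standardized scale, let $W_i$ be the pivotal root behind $\hat I_i$ --- $(g(y_i)-\theta_i)/\sqrt{D_i}$ for $\hat I_i^{D}$, and $(\theta_i-\hat\theta_i^{EB})/\sqrt{\hat M_i}$ for $\hat I_i^{EB}$ --- and $[L_i,U_i]$ the associated acceptance interval, so that $\{\theta_i\in\hat I_i\}=\{W_i\in[L_i,U_i]\}$ while $\{T^{-1}(\mu_i)\in\hat I_i\}=\{W_i\in[L_i-s_i,\,U_i-s_i]\}$ with $s_i=\eta_i/\sqrt{D_i}=O(N^{-1})$ (for the EB root $\hat M_i\asymp D_i$, of order $n_i^{-1}$, so again $s_i=O(N^{-1})$).

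The heart of the matter is the perturbation estimate
$$\bigl|P\bigl(W_i\in[L_i-s_i,U_i-s_i]\bigr)-P\bigl(W_i\in[L_i,U_i]\bigr)\bigr|=O(N^{-3/2}).$$
The crude bound --- the two events disagree only when $W_i$ lies within $|s_i|$ of an endpoint, which has probability $O(|s_i|)=O(N^{-1})$ --- is too weak; the improvement comes from a cancellation between the two endpoints. A shift of the interval by $s_i$ changes the probability by $s_i\bigl[f_{W_i}(L_i)-f_{W_i}(U_i)\bigr]+O(s_i^2)$, and this bracket is $O(N^{-1/2})$ because (i) the endpoints are symmetric up to $O(N^{-1/2})$ (exactly $\pm z_{\alpha/2}$ for $\hat I_i^{D}$; $q_{i;\alpha/2}+q_{i;1-\alpha/2}=O(n_i^{-1/2})$ for $\hat I_i^{EB}$, since the root is asymptotically pivotal $N(0,1)$), and (ii) $f_{W_i}$ is symmetric up to $O(N^{-1/2})$, its leading asymmetry being the $O(n_i^{-1/2})$ skewness term in the Edgeworth expansion of the transformed survey estimate, the effect of estimating $\beta$ and $A$ entering only at $O(m^{-1})$ by R5. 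Hence the change is $s_i\cdot O(N^{-1/2})+O(s_i^2)=O(N^{-3/2})$; conditioning on $\theta_i$ (so that $s_i$ is fixed) and then integrating out $(L_i,U_i)$ makes this precise, and combining with the exact identity of the first paragraph gives the stated conclusion.

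The real obstacle is precisely ingredients (i)--(ii): justifying the $O(N^{-1/2})$ near-symmetry --- of the interval's quantiles and of the root's density --- uniformly in $i$ in the regime where $n_i$ is also large. For $\hat I_i^{D}$ this is just an Edgeworth expansion for $g(y_i)$ together with the fact that a symmetric interval kills the $O(n_i^{-1/2})$ term. For $\hat I_i^{EB}$ it requires an Edgeworth-type analysis of the EBLUP root and of its bootstrap quantiles in the spirit of \cite{chatterjee2008}, adapted so that the error bookkeeping tracks $\min(n_i,m)$ rather than $m$ alone. Everything else --- monotonicity of $T$, the order bookkeeping, and the reduction --- is routine.
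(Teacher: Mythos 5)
Your opening reduction is exactly the paper's: since $g^{-1}$ is strictly increasing and $1+a_iD_i/2>0$, the event $\{\theta_i\in\hat I_i\}$ is mapped exactly onto the back-transformed event, and \eqref{eq7.25312} leaves only an $O(n_i^{-3/2})$ displacement ($T^{-1}(\mu_i)=\theta_i+\eta_i$, $\eta_i=O(N^{-3/2})$ by R4) to be absorbed into the coverage probability. Where you diverge is in how that displacement is controlled. The paper finishes in one line, multiplying the $O(n_i^{-3/2})$ perturbation by an $O(1)$ bound on the density over the perturbed region $\Xi$; your observation that a naive bound delivers only $O(N^{-1})$ is in fact a fair criticism of that step, because the perturbed region is random and its endpoints sit within $O_p(N^{-1/2})$ of $\theta_i$, so conditionally on the data (or on $\theta_i$) the relevant density scale is $O(N^{1/2})$ rather than $O(1)$, and some endpoint cancellation of the kind you describe is what actually underwrites the $O(N^{-3/2})$ (indeed $O(N^{-2})$ for the direct interval) rate.

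As a proof, however, your proposal is incomplete, and in one place misdirected. For $\hat I_i^{D}$ no Edgeworth expansion is needed: Theorem \ref{th3} is proved under the working model \eqref{model}, where $g(y_i)-\theta_i=e_i\sim N(0,D_i)$ exactly and independently of $\theta_i$; conditioning on $\theta_i$ makes the shift $s_i=\eta_i/\sqrt{D_i}=O(N^{-1})$ a constant, and exact symmetry of $\phi$ gives coverage $\Phi(z_{\alpha/2}+s_i)-\Phi(-z_{\alpha/2}+s_i)=1-\alpha+O(s_i^{2})=1-\alpha+O(N^{-2})$, so that case closes immediately — appealing to the skewness of the true sampling distribution of $y_i$ addresses a stronger, design-based claim the theorem does not make. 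The genuine gap is $\hat I_i^{EB}$: there the root $(\theta_i-\hat\theta_i^{EB})/\sqrt{\hat M_i}$ is \emph{not} independent of $\theta_i$, so after conditioning on $\theta_i$ the expansion $s_i\{f(L_i)-f(U_i)\}+O(s_i^2)$ involves the \emph{conditional} law of the root given $\theta_i$, and the two ingredients you need — $O(N^{-1/2})$ near-symmetry of that conditional density and $q_{i;\alpha/2}+q_{i;1-\alpha/2}=O(N^{-1/2})$ for the bootstrap quantiles, uniformly in the regime $n_i=O(N)$ — are asserted, not established; you yourself flag the required adaptation of \cite{chatterjee2008} as the "real obstacle." So the skeleton matches the paper and your diagnosis of the crude bound is sharper than the paper's own one-line justification, but the empirical Bayes half of the statement remains unproved in your proposal.
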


\begin{proof}
Using \eqref{eq7.25312},
\begin{align*}
P(\theta_i \in \hat I_{i})=&P\left((1+a_iD_i/2)\mu_i+D_ib_i/2 \in g^{-1}(\hat I_{i})+O(n_i^{-3/2})\right),\\
=&P\left(\mu_i\in \frac{1}{1+a_iD_i/2}\{g^{-1}(\hat I_{i})-D_ib_i/2\}\right)+O(n_i^{-3/2}).
\end{align*}
The result is used in the above second equality that $\sup_{x\in \Xi} f(x)=O(1)$, where $f(x)$ is the density and $\Xi=(\{g^{-1}(\hat I_{i})-D_ib_i/2\}\pm O(n_i^{-3/2}))$. 
Then we consider $\hat I_{i}$ such that $P(\theta_i \in \hat I_{i})=1-\alpha+O(N^{-3/2})$ in our asymptotic setting. 
This completes the proof. 
\end{proof}

\noindent The above theorem provides the following corollary: 
\begin{coro}
    Under the regularity conditions R1-R5, for large $N$,
    \begin{description}
        \item[(i)]$P\left(\mu_i\in \frac{1}{1+a_iD_i/2}\{g^{-1}(\hat I_{i}^D)-D_ib_i/2\}\right)=1-\alpha+O(N^{-3/2}),$
        \item[(ii)]$P\left(\mu_i\in \frac{1}{1+a_iD_i/2}\{g^{-1}(\hat I_{i}^{EB})-D_ib_i/2\}\right)=1-\alpha+O(N^{-3/2}).$
    \end{description}
\end{coro}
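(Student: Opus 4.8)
The plan is to derive the corollary as a direct specialization of Theorem~\ref{th3}. That theorem already shows that for \emph{any} interval $\hat I_i$ satisfying $P(\theta_i\in\hat I_i)=1-\alpha+O(N^{-3/2})$, the transformed set $\frac{1}{1+a_iD_i/2}\{g^{-1}(\hat I_i)-D_ib_i/2\}$ covers $\mu_i$ with probability $1-\alpha+O(N^{-3/2})$, the key input being the relation \eqref{eq7.25312} between $\mu_i$ and $g^{-1}(\theta_i)$ valid up to $O(n_i^{-3/2})$. Hence the task collapses to verifying that the two specific intervals named just before the corollary, the direct interval $\hat I_i^{D}=g(y_i)\pm z_{\alpha/2}\sqrt{D_i}$ and the empirical Bayes interval $\hat I_i^{EB}=[\hat\theta_i^{EB}+q_{i;\alpha/2}\sqrt{\hat M_i},\,\hat\theta_i^{EB}+q_{i;1-\alpha/2}\sqrt{\hat M_i}]$, are $1-\alpha+O(N^{-3/2})$ confidence intervals for $\theta_i$ under R1--R5; then (i) and (ii) follow by taking $\hat I_i=\hat I_i^{D}$ and $\hat I_i=\hat I_i^{EB}$ in Theorem~\ref{th3}.

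For part (i) I would argue that $g(y_i)$ is approximately $N(\theta_i,D_i)$ with an error small enough for a symmetric two-sided interval. The expansions \eqref{eq2.25312}--\eqref{eq5.25312} that defined $\theta_i$ through the shift $\delta_i$ already give $E[g(y_i)\mid\theta_i]=\theta_i+O(n_i^{-3/2})$ and $V[g(y_i)\mid\theta_i]=D_i+O(n_i^{-3/2})$; under R3--R4 together with the assumed orders of the central moments of $y_i$, the third and higher cumulants of $g(y_i)$ are of negligible order as well. Feeding these cumulants into an Edgeworth expansion for the studentized root $T_i=(g(y_i)-\theta_i)/\sqrt{D_i}$ yields $P(\theta_i\in\hat I_i^{D})=P(|T_i|\le z_{\alpha/2})=1-\alpha+O(N^{-3/2})$, the leading skewness correction being an even function of its argument and hence dropping out of the two-sided symmetric probability, with the residual of the stated order in the regime $n_i=O(N)$. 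Substituting $\hat I_i^{D}$ into Theorem~\ref{th3} gives (i).

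For part (ii) the point is that $\hat I_i^{EB}$ is, by construction, one of the second-order calibrated EBLUP intervals developed for the untransformed Fay--Herriot model: $\hat\theta_i^{EB}$ is the EBLUP of $\theta_i$, $\hat M_i$ is a second-order unbiased estimator of its uncertainty (supplied by Theorem~\ref{mse.app1}(iii) and the estimators cited there), and $q_{i;\alpha/2},q_{i;1-\alpha/2}$ are the bootstrap-calibrated quantiles of $(\theta_i-\hat\theta_i^{EB})/\sqrt{\hat M_i}$ chosen precisely so that $P(\theta_i\in\hat I_i^{EB})=1-\alpha+O(N^{-3/2})$ under R1--R5. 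With this coverage property quoted from the cited small-area literature, Theorem~\ref{th3} again delivers the claim.

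The main obstacle is the verification in part (i): one must carry the cumulant/Edgeworth bookkeeping for the survey-weighted statistic $g(y_i)$ far enough, and uniformly over the areas under general weights $w_{ij}$, to confirm that every term that could exceed $O(N^{-3/2})$ is either annihilated by the evenness/oddness structure of the two-sided interval or already absorbed into the $\delta_i$ bias correction. A secondary care point is aligning the $m$- and $n_i$-asymptotics underlying the borrowed EBLUP-interval result with the pooled $N$-asymptotics used throughout, so that its coverage error is genuinely $O(N^{-3/2})$ in the present setting.
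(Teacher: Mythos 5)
Your proposal matches the paper's (implicit) argument: the corollary is obtained simply by specializing Theorem~\ref{th3} to $\hat I_{i}^{D}$ and $\hat I_i^{EB}$, whose $1-\alpha+O(N^{-3/2})$ coverage of $\theta_i$ is taken as given — from the working model for the direct interval and from the cited untransformed-case literature for the EB interval. The only superfluous part is your Edgeworth analysis in (i): under model \eqref{model} the error $e_i$ is exactly $N(0,D_i)$, so $P(\theta_i\in\hat I_{i}^{D})=1-\alpha$ holds exactly and no cumulant expansion for $g(y_i)$ is required.
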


\noindent From the above corollary, we call $\hat I_i^{TD}$ and $\hat I_i^{TEB}$ new transformed confidence interval which denoted as $$\hat I_{i}^{TD}:= \frac{1}{1+a_iD_i/2}\{g^{-1}(\hat I_{i}^{D})-D_ib_i/2\}$$ and $$\hat I_i^{TEB}:= \frac{1}{1+a_iD_i/2}\{g^{-1}(\hat I_{i}^{EB})-D_ib_i/2\}.$$  

A pertinent question is which interval is more desirable in terms of length. This question is indeed important especially for small area estimation. 
For untransformed data, $\hat I_i^{TEB}=\hat I_i^{EB}$ has a smaller length than that of $\hat I_{i}^{TD}=\hat I_{i}^{D}$ in the asymptotic sense for large $m$ and fixed $n_i$. In particular, the empirical Bayes confidence interval proposed by \cite{yoshimori2014second} always yields a shorter length than $\hat I_{i}^{D}$, while maintaining the desired asymptotic coverage probability. 
Incidentally, \cite{cox1975} interval also provides this desired coverage probability in our asymptotic setting for large $N$ because $g_{1i}(A)=O(N^{-1})$ and Theorem 1 in \cite{yoshimori2014second}. 

\begin{remark}
Our interval provides a shorter length than the corresponding naive transformed confidence interval due to the effect of the multiplier $\frac{1}{1+a_iD_i/2}$ if $a_i>0$. 
\end{remark}

However, unfortunately, we do not have a guarantee that the $\hat I_i^{TEB}$ has a smaller length than that of $\hat I_{i}^{TD}$ in our asymptotic sense, due to the complex non-linear $g(\cdot)$. 
Nonetheless, we can obtain asymptotic shorter length confidence interval asymptotically than that of $\hat I_i^{TD}$ from the following theorem. 

\begin{theorem}
\label{length.thm}
Under the regularity conditions R1-R5, we have for large $N$, 

\begin{align*}
L_{i;TD}-L_{i;TEB}=&2z_{\alpha/2} \Big[\left(\sqrt{D_{i}}-\sqrt{g_{1i}(\hat A)}\right) g^{-1}(x)^{\prime}\Big |_{x=\hat \theta_i^{EB}}\\
&+\sqrt{D_i}\{g(y_i)-\hat\theta_i^{EB}\}g^{-1}(x)^{\prime\prime}\Big |_{x=\hat \theta_i^{EB}}\Big]
+O_p(n_i^{-2}),\\
=&2z_{\alpha/2} \Big[\left(\sqrt{D_{i}}-\sqrt{g_{1i}(\hat A)}\right) g^{-1}(x)^{\prime}\Big |_{x=\hat \theta_i^{EB}}\\
&+\sqrt{D_i}\{g(y_i)-\hat\theta_i^{EB}\}g^{-1}(x)^{\prime\prime}\Big |_{x=\hat \theta_i^{EB}}\Big]
+O_p(N^{-2}),\\
=&O_p(N^{-3/2})
\end{align*}  
where $L_{i;TD}$ and $L_{i;TEB}$ are the lengths of $\hat I_{i}^{TD}$ and $\hat I_i^{TEB}$, respectively. 
\end{theorem}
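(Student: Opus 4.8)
The plan is to compute the two interval lengths explicitly, Taylor-expand $g^{-1}$ about the common centering point $\hat\theta_i^{EB}$, and track orders using the regularity conditions R1--R5 together with the asymptotic setting $n_i = O(N)$. By definition, $L_{i;TD}$ is the length of $\frac{1}{1+a_iD_i/2}\{g^{-1}(\hat I_i^D)-D_ib_i/2\}$ and $L_{i;TEB}$ is the length of $\frac{1}{1+a_iD_i/2}\{g^{-1}(\hat I_i^{EB})-D_ib_i/2\}$; since the additive shift $-D_ib_i/2$ and the positive multiplier $(1+a_iD_i/2)^{-1}$ are common to both intervals, they factor out, and the difference of lengths is $(1+a_iD_i/2)^{-1}$ times the difference of the lengths of $g^{-1}(\hat I_i^D)$ and $g^{-1}(\hat I_i^{EB})$. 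Because $D_i = O(n_i^{-1}) = O(N^{-1})$, the multiplier is $1 + O(N^{-1})$, so to the order claimed it can be replaced by $1$, and it suffices to analyze $g^{-1}$ applied to the two raw intervals.

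First I would write $\hat I_i^D = [g(y_i) - z_{\alpha/2}\sqrt{D_i},\, g(y_i) + z_{\alpha/2}\sqrt{D_i}]$, whose midpoint is $g(y_i)$ and half-width $z_{\alpha/2}\sqrt{D_i}$, and $\hat I_i^{EB} = [\hat\theta_i^{EB} - q\sqrt{g_{1i}(\hat A)},\, \hat\theta_i^{EB} + q\sqrt{g_{1i}(\hat A)}]$, where I am using the normal approximation $q \approx z_{\alpha/2}$ (the quantiles $q_{i;\alpha/2}$ of the Chatterjee--Lahiri root differ from $\mp z_{\alpha/2}$ only by $O(m^{-1}) = O(N^{-1})$, which feeds into the remainder). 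The length of $g^{-1}$ of an interval with midpoint $c$ and half-width $h$ is $g^{-1}(c+h) - g^{-1}(c-h) = 2h\,g^{-1}(c)^{\prime} + \tfrac{2}{3}h^3 g^{-1}(c)^{\prime\prime\prime}(\xi)$ by a third-order Taylor expansion, but since the two intervals do \emph{not} share a midpoint I would instead expand each about the common point $\hat\theta_i^{EB}$. Writing $g(y_i) = \hat\theta_i^{EB} + \Delta_i$ with $\Delta_i := g(y_i) - \hat\theta_i^{EB} = O_p(N^{-1/2})$ (this is the shrinkage discrepancy, which is $O_p(\sqrt{D_i})$), a two-term expansion of each endpoint of $g^{-1}(\hat I_i^D)$ about $\hat\theta_i^{EB}$ and of each endpoint of $g^{-1}(\hat I_i^{EB})$ about $\hat\theta_i^{EB}$, followed by subtraction, should produce exactly the displayed leading term
\begin{align*}
2z_{\alpha/2}\Big[\big(\sqrt{D_i}-\sqrt{g_{1i}(\hat A)}\big)g^{-1}(x)^{\prime}\big|_{x=\hat\theta_i^{EB}} + \sqrt{D_i}\,\Delta_i\, g^{-1}(x)^{\prime\prime}\big|_{x=\hat\theta_i^{EB}}\Big],
\end{align*}
with the higher-order Taylor remainders absorbed into $O_p(n_i^{-2})$. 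Here R4 (uniform boundedness of $g^{\prime},g^{\prime\prime},g^{\prime\prime\prime}$, hence of the derivatives of $g^{-1}$ since $g^{\prime}$ is bounded away from $0$ by variance stabilization) controls the remainder terms, and R3/R5 give $D_i = O(N^{-1})$, $g_{1i}(\hat A) = A D_i/(A+D_i) = O(N^{-1})$.

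For the final order statement $O_p(N^{-3/2})$, I would bound each of the two bracketed terms separately. The second term is immediate: $\sqrt{D_i} = O(N^{-1/2})$ and $\Delta_i = O_p(N^{-1/2})$ give $O_p(N^{-1})$ — wait, that is only $O_p(N^{-1})$, so the claimed $O_p(N^{-3/2})$ needs a sharper look; in fact $\Delta_i = g(y_i) - \hat\theta_i^{EB}$, and since $\hat\theta_i^{EB}$ shrinks $g(y_i)$ toward $x_i^{\prime}\hat\beta$ by a factor $D_i/(A+D_i) = O(N^{-1})$, one has $\Delta_i = \frac{D_i}{A+D_i}(g(y_i) - x_i^{\prime}\hat\beta) = O_p(N^{-1})$, giving the second term $O_p(N^{-3/2})$. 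Similarly $\sqrt{D_i} - \sqrt{g_{1i}(\hat A)} = \sqrt{D_i}\big(1 - \sqrt{A/(A+D_i)}\big) = \sqrt{D_i}\cdot O(D_i) = O(N^{-3/2})$, so the first term is also $O_p(N^{-3/2})$. The main obstacle I anticipate is the bookkeeping around the bootstrap quantiles $q_{i;\alpha/2}$ versus the normal quantiles $z_{\alpha/2}$: one must invoke the Chatterjee--Lahiri expansion (cited in the paper) to argue $q_{i;\alpha/2} = -z_{\alpha/2} + O(N^{-1})$ and then verify that multiplying this $O(N^{-1})$ correction by $\sqrt{g_{1i}(\hat A)} = O(N^{-1/2})$ yields an $O(N^{-3/2})$ contribution consistent with the stated remainder — and likewise to check that replacing $\hat I_i^{EB}$'s half-width $q\sqrt{g_{1i}(\hat A)}$ by $z_{\alpha/2}\sqrt{g_{1i}(\hat A)}$ throughout is legitimate to the claimed order. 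The Taylor expansions themselves, and the order tracking via R1--R5, are routine once that calibration is pinned down.
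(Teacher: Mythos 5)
Your proposal is correct and follows essentially the same route as the paper: factor out the common multiplier $(1+a_iD_i/2)^{-1}=1+O(N^{-1})$ and the common shift, Taylor-expand $g^{-1}$ at the four endpoints about a single centering point (the paper expands about $\hat I_{i,L}^{EB}$ and converts to $\hat\theta_i^{EB}$ at the end; you expand about $\hat\theta_i^{EB}$ directly, which is equivalent and slightly cleaner), and use the two sharp order facts $g(y_i)-\hat\theta_i^{EB}=O_p(N^{-1})$ and $\sqrt{D_i}-\sqrt{g_{1i}(\hat A)}=O_p(N^{-3/2})$ that you correctly identify. The one calibration issue you flag is moot for this theorem: the paper's proof takes the EB interval with half-width exactly $z_{\alpha/2}\sqrt{g_{1i}(\hat A)}$ (no bootstrap quantiles enter), which is just as well, because a correction $q_{i;\alpha/2}=-z_{\alpha/2}+O(N^{-1})$ would contribute at order $O_p(N^{-3/2})$ --- the same order as the leading term, not the $O_p(N^{-2})$ remainder as your parenthetical suggests.
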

\noindent The proof is deferred in Appendix A. 
The above result implies the length of $\hat I_{i}^{TEB}$ provides a smaller length than the direct-based confidence interval $\hat I_{i}^{TD}$, up to the order of $O_p(N^{-3/2})$ for large $N$ when the two terms being of the order $O_p(N^{-3/2})$ on the right sides are positive. 
From the theorem, the below corollary is also obtained as one example.  
\begin{coro}
    Under the regularity conditions, if $g(\cdot)$ satisfies the following conditions [(i) and (iia)] or [(i) and (iib)]: 
\begin{description}
    \item[(i)]$g^{-1}(\cdot)$ is a monotone increasing function of $\mu_i$;
    \item [(iia)]$g^{-1}(\cdot)$ is a convex at the point $\hat \theta_i^{EB}$ and $g(y_i)>\hat \theta_{i}^{EB}$;
    \item [(iib)]$g^{-1}(\cdot)$ is a concave at the point $\hat \theta_i^{EB}$ and $g(y_i)<\hat \theta_{i}^{EB}$;
\end{description}
then the shorter length of $\hat I_i^{TEB}$ is achieved, up to the order of $O_p(N^{-3/2})$. 
\end{coro}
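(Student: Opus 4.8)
The plan is to read the corollary off the expansion established in Theorem~\ref{length.thm}, the only work being a sign-and-order analysis of its two leading terms. Write
\[
\Delta_i := L_{i;TD}-L_{i;TEB}
= 2z_{\alpha/2}\bigl(T_{1i}+T_{2i}\bigr)+O_p(N^{-2}),
\]
where $T_{1i}=\bigl(\sqrt{D_i}-\sqrt{g_{1i}(\hat A)}\bigr)\,g^{-1}(x)^{\prime}\big|_{x=\hat\theta_i^{EB}}$ and $T_{2i}=\sqrt{D_i}\,\{g(y_i)-\hat\theta_i^{EB}\}\,g^{-1}(x)^{\prime\prime}\big|_{x=\hat\theta_i^{EB}}$. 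It suffices to show that, under conditions (i) and (iia), or (i) and (iib), the term $T_{1i}$ is strictly positive and of exact order $N^{-3/2}$ in probability, that $T_{2i}$ is nonnegative and $O_p(N^{-3/2})$, and hence that $T_{1i}+T_{2i}$ strictly dominates the $O_p(N^{-2})$ remainder, forcing $\Delta_i>0$ with probability tending to one.

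\emph{First term.} Since $g_{1i}(\hat A)=\hat A D_i/(\hat A+D_i)$ with $\hat A>0$ and $D_i>0$ holding with probability tending to one (R3, R5), we have $0<g_{1i}(\hat A)<D_i$, so $\sqrt{D_i}-\sqrt{g_{1i}(\hat A)}>0$; a one-term expansion gives $\sqrt{D_i}-\sqrt{g_{1i}(\hat A)}=D_i^{3/2}/(2\hat A)+O_p(D_i^{5/2})$, which is of exact order $N^{-3/2}$ because $D_i$ is of exact order $N^{-1}$ under R3 and the present asymptotic setting and $\hat A$ is bounded away from $0$ and $\infty$. By condition (i), $g^{-1}$ is increasing, so $g^{-1}(x)^{\prime}\big|_{x=\hat\theta_i^{EB}}=1/g'\bigl(g^{-1}(\hat\theta_i^{EB})\bigr)>0$, and by R4 it is bounded away from $0$ since $g'$ is uniformly bounded. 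Therefore $T_{1i}>0$ and is of exact order $N^{-3/2}$ in probability.

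\emph{Second term and conclusion.} Here $\sqrt{D_i}>0$, and writing the EBLUP as $\hat\theta_i^{EB}=x_i'\hat\beta+\hat B_i\{g(y_i)-x_i'\hat\beta\}$ with $1-\hat B_i=D_i/(\hat A+D_i)=O_p(N^{-1})$ and $g(y_i)-x_i'\hat\beta=O_p(1)$ (from $g(y_i)=x_i'\beta+u_i+e_i$ together with R1--R2 and R5), we obtain $g(y_i)-\hat\theta_i^{EB}=(1-\hat B_i)\{g(y_i)-x_i'\hat\beta\}=O_p(N^{-1})$, so $T_{2i}=O_p(N^{-3/2})$ by R4. Under (iia) both $g(y_i)-\hat\theta_i^{EB}>0$ and $g^{-1}(x)^{\prime\prime}\big|_{x=\hat\theta_i^{EB}}>0$, so $T_{2i}\ge 0$; under (iib) both factors are negative, so again $T_{2i}\ge 0$. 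Hence in either case $T_{1i}+T_{2i}\ge T_{1i}>0$, and it is $O_p(N^{-3/2})$; since $T_{1i}$ is bounded below in probability by a positive multiple of $N^{-3/2}$, it strictly dominates the $O_p(N^{-2})$ remainder in $\Delta_i$. Therefore $L_{i;TD}-L_{i;TEB}>0$ with probability tending to one and the gap is of order $O_p(N^{-3/2})$, which is the assertion.

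I expect the only delicate point to be verifying that $T_{1i}$ is not merely $O_p(N^{-3/2})$ but is bounded below, with probability tending to one, by a strictly positive quantity of that exact order; this is what prevents it from being absorbed into the $O_p(N^{-2})$ correction or into the possibly-vanishing contribution of $T_{2i}$ when the inequalities in (iia)/(iib) hold only weakly. It rests on $D_i$ being of exact order $N^{-1}$ (R3 and the asymptotic setting, noting $D_i$ is of the same order as $\sigma_{iw}^2[g'(\mu_i)]^2\gtrsim n_i^{-1}\ge N^{-1}$), the two-sided boundedness of $g'$ (R4), and the boundedness of $\hat A$ away from $0$ and $\infty$ (R3, R5).
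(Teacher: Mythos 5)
Your proposal is correct and follows essentially the same route as the paper: the corollary is read off the expansion in Theorem~\ref{length.thm} by checking that both $O_p(N^{-3/2})$ terms are positive under (i) and (iia)/(iib), using $P(g_{1i}(\hat A)<D_i)=1$ and $z_{\alpha/2}>0$. Your additional care in showing that the first term is of exact order $N^{-3/2}$ (so it dominates the $O_p(N^{-2})$ remainder even when the second term vanishes) is a useful tightening of what the paper leaves implicit, but it is not a different argument.
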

\noindent Note that $P(g_{1i}(\hat A)<D_i)=1$ and $z_{\alpha/2}>0$ with $\alpha<1/2$.
For instance, let us consider example in Remark \ref{remark2.3} with $g^{-1}(\cdot)=\frac{1}{2}(1+\sin(\cdot))$. This is a monotone increasing and convex (or concave) in the case $(\cdot)>0$ ($(\cdot)<0$).  

\begin{remark}
\label{Re1}
If the distribution of the direct estimator $y_i$ in an original scale is discontinuous, we may suggest to add Yates correction for each confidence interval. For example, in the arcsin transformed model, we may add $\pm w_{i;med}/2$ with the median of $w_{ij}$ for i-th area, $w_{i;med}$, for each side of the confidence limit such that: 
\begin{align}
    \hat I_{i}^{C}=\left[\hat I_{i,L}-\frac{w_{i;med}}{2},\hat I_{i,L}+\frac{w_{i;med}}{2}\right],
\end{align}
where $\hat I_{i,L}$ and $\hat I_{i,U}$ are the lower and upper confidence limits, respectively. 
\end{remark}
\noindent In the above, note that $\sum_{j}w_{i;med}y_{ij}/w_{i;med}\sim Bin (n_i,p_i)$ while $y_{ij}\sim^{ind.}Ber (p_i)$ and $1/2$ is often used for the correction for binomial distribution. 

Here is an example of the arc-sin transformed case given in Remark \ref{remark2.3}. 
In this case, we obtain the corrected direct based confidence interval $\hat I_{i}^{CTD}$ as follows: 
$$\hat I_{i}^{CTD}:= \left[\frac{1+\frac{1}{1+D_i/2}\sin(\hat I_{i,L}^{D})}{2}-\frac{w_{i;med}}{2},\frac{1+\frac{1}{1+D_i/2}\sin(\hat I_{i,U}^{D})}{2}+\frac{w_{i;med}}{2}\right].$$

\noindent Similarly, we may also suggest the corrected versions $\hat I_{i}^{CTEB}$ as follows: 
$$\hat I_{i}^{CTEB}:= \left[\frac{1+\frac{1}{1+D_i/2}\sin(\hat I_{i,L}^{EB})}{2}-\frac{w_{i;med}}{2},\frac{1+\frac{1}{1+D_i/2}\sin(\hat I_{i,U}^{EB})}{2}+\frac{w_{i;med}}{2}\right].$$

\section{Simulation Study}

    We conducted a finite simulation study using arcsin-transformed data $g({y}_i)=\sin^{-1}(2y_i-1)$, as described in Remark \ref{remark2.3}. The original-scale data was randomly generated from 
    $y_{ij}\sim^{indep.}Bernoulli(p_i)$, and the direct estimator was considered as $y_i=\sum_j^{n_i} w_{ij}y_{ij}$, with the assumption $\theta_i=\mu+u_i$ where $\mu=0$ and $u_i\sim^{iid}N(0,A)$. 
    We assumed $p_i=\frac{1}{2}\{1+\sin(\theta_i)/(1+D_i/2)\}$ based on \eqref{eq7.25312}, reflecting the implicit relationship between $p_i$ and $\theta_i$ in this simulation study. 
     
This simulation setting considered different survey weight with $w_{ij}=\tilde w_{ij}/\sum_j \tilde w_{ij}$, where $\tilde w_{ij}=\tilde w_{0ij}/\sum \tilde w_{0ij}$ with the pattern $\tilde w_{0ij} \in \{1,1,2,3,3\}$ for each of the five groups of individuals within each area $i$. Additionally we examined scenarios with  $m \in \{15,50\}, n_i \in \{10,100\}$ and $A=0.006$, while setting the replication number to $5,000$.  

\subsection{Evaluation of EB estimators and MSE estimator}
We first evaluated the performance of six empirical Bayes (EB) estimators and the direct estimator, as follows:
\begin{description}
\item [(NBT.RE)]Natural back transformation with the REML estimator of $A$,  
\item [(NBT.YL)]Natural back transformation with the adjusted REML estimator of $A$ suggested by \cite{yoshimori2014} to avoid zero estimates.
\item [(pEB.RE)] EB estimator with the REML estimator of $A$ under the assumption $\theta_i=g(\mu_i)$,
\item [(pEB.YL)] EB estimator with the adjusted REML estimator of $A$ under the assumption $\theta_i=g(\mu_i)$,
\item [(EB.RE)] EB estimator with the REML estimator of $A$ under our new assumption based on \eqref{eq7.25312}, up to the order $O(n_i^{-1})$,
\item [(EB.YL)] EB estimator with the adjusted REML estimator of $A$ under our new assumption based on \eqref{eq7.25312}, up to the order $O(n_i^{-1})$,
\item[(Direct)] the direct estimator $y_i$. 
\end{description}

Table \ref{bias} presents the average absolute bias values (multiplied by $10^2$) across $m$ areas for each estimator. 
Our EB.RE and EB.YL estimators showed slight improvements, particularly in cases where $n_i = 10$.
Furthermore, not only in terms of bias but also in terms of MSE, our EB estimators appeared to outperform the others, as shown in Table \ref{MSE}, where the MSE values are multiplied by $10^4$.

\begin{table}[h!]
\centering
\begin{tabular}{rrrrrrrr}
  \hline
($m,n_i$) & NBT.RE & NBT.YL & pEB.RE & pEB.YL & EB.RE & EB.YL & Direct \\ 
  \hline
(15,10) & 0.069 & 0.072 & 0.068 & 0.071 & 0.067 & 0.069 & 0.170 \\ 
  (15,100) & 0.043 & 0.042 & 0.043 & 0.042 & 0.042 & 0.042 & 0.075 \\ 
  \hline
(50,10) & 0.063 & 0.063 & 0.062 & 0.062 & 0.059 & 0.059 & 0.212 \\ 
  (50,100) & 0.041 & 0.041 & 0.041 & 0.041 & 0.041 & 0.041 & 0.073 \\ 

   \hline
\end{tabular}
\caption{Average absolute bias multiplied by $10^2$ across $m$ areas for each estimator} 
\label{bias}
\end{table}

\begin{table}[h!]
\centering
\begin{tabular}{rrrrrrrr}
  \hline
($m,n_i$) & NBT.RE & NBT.YL & pEB.RE & pEB.YL & EB.RE & EB.YL & Direct \\ 
  \hline
(15,10) & 64.11 & 64.35 & 61.84 & 62.00 & 55.91 & 56.03 & 306.98 \\ 
  (15,100) & 13.26 & 13.09 & 13.23 & 13.06 & 13.18 & 13.01 & 30.85 \\ 
  \hline
(50,10) & 37.25 & 37.23 & 36.24 & 36.21 & 33.30 & 33.27 & 307.38 \\ 
  (50,100) & 11.20 & 11.19 & 11.20 & 11.18 & 11.18 & 11.16 & 30.67 \\ 

   \hline
\end{tabular}
\caption{MSE values multiplied by $10^4$ across $m$ areas for each estimator} 
\label{MSE}
\end{table}

Next, we evaluated the performance of five MSE estimators for our EB estimators: EB.RE and EB.YL, as follows:
\begin{description} 
\item [(M1.RE)] The MSE estimator for EB.RE, constructed using only the first term $\hat{M}_{1i}$ as $\hat{M}_{i*}$ provided in \cite{hirose2023},
\item [(M1.YL)] The MSE estimator for EB.YL, constructed using only the first term $\hat{M}_{1i}$ as $\hat{M}_{i*}$ provided in \cite{hirose2023},
\item [(Ms.RE)] The MSE estimator for EB.RE, incorporating bias-corrected terms as $\hat{M}_{i*}$ presented in \cite{hirose2023},
\item [(Ms.YL)] The MSE estimator for EB.YL, incorporating bias-corrected terms as $\hat{M}_{i*}$ presented in \cite{hirose2023},
\item [(pMs.YL)] The MSE estimator for EB.YL, incorporating bias-corrected terms as $\hat{M}_{i*}$ presented in \cite{hirose2023}, but excluding the term $\frac{1}{(1+a_iD_i/2)^2}$. 
\end{description}

The following PRB and PRRMSE were used for the evaluation:
\begin{align*}
PRB&=\frac{1}{mR}\sum_{i=1}^m \sum_{r=1}^R \frac{\hat M_i^{(r)}-M_i}{M_i}  \times 100,\\
PRRMSE&=\frac{1}{m}\sum_{i=1}^m \frac{1}{RM_i}[\sum_{r=1}^R (\hat M_i^{(r)}-M_i)^2]^{1/2}  \times 100,
\end{align*}
where the replication number was $R = 5,000$, and $M_i$ represented the simulated MSEs of EB.RE (for M1.RE and Ms.RE) and EB.YL (for the other three estimators).
Tables \ref{PRB}-\ref{PRRMSE} indicated that pMs.YL is generally inferior to the other four MSE estimators in terms of PRB and PRRMSE except the case where $m=50$ and $n_i=100$. Additionally, M1.RE and M1.YL appeared to outperform Ms.RE and Ms.YL in cases where $n_i$ is small.

\begin{table}[h!]
\centering
\begin{tabular}{rrrrrr}
  \hline
($m,n_i$) & M1.RE & M1.YL & Ms.RE & Ms.YL & pMs.YL \\ 
  \hline
(15,10) & -20.99 & -11.19 & 87.93 & 94.79 & 119.54 \\ 
  (15,100) & -32.98 & -30.02 & 11.40 & 13.57 & 14.97 \\ 
   \hline
   (50,10) & 36.33 & 38.58 & 92.04 & 94.10 & 118.76 \\ 
  (50,100) & -16.44 & -16.23 & -1.32 & -1.14 & 0.09 \\ 
  \hline
\end{tabular}
\caption{PRB values for each case} 
\label{PRB}
\end{table}

\begin{table}[h!]
\centering
\begin{tabular}{rrrrrr}
  \hline
($m,n_i$) & M1.RE & M1.YL & Ms.RE & Ms.YL & pMs.YL \\ 
  \hline
(15,10) & 83.94 & 74.53 & 103.20 & 106.53 & 131.49 \\ 
  (15,100) & 57.80 & 53.31 & 36.36 & 34.78 & 35.71 \\ 
   \hline
(50,10) & 114.15 & 112.49 & 134.23 & 133.98 & 160.18 \\ 
  (50,100) & 40.33 & 39.87 & 33.85 & 33.47 & 33.87 \\ 
   \hline

\end{tabular}
\caption{PRRMSE values for each case}
\label{PRRMSE}
\end{table}

We further investigated the percentage of REML estimates being zero, as presented in Table \ref{nega.M}. 
In practice, an estimate of $\hat A$ equal to zero is an unrealistic inference, affecting not only the estimation of $A$ but also the mean squared error (MSE) of the empirical Bayes estimator. Notably, the leading term of the MSE estimate gets zero when $\hat A$ is zero. For further details on this issue, refer to \cite{yoshimori2014}. This table shows that the percentage tends to be high when $m$ and $n_i$ are small. As a consequence of the result, the same percentage of the MSE estimate M1.RE being zero was observed in each case.

\begin{table}[h!]
\centering
\begin{tabular}{rrr}
  \hline
$n_i$ & $m=$15 & $m=$50 \\ 
  \hline
10 & 36.82 &20.96 \\ 
100 & 18.48 &3.66  \\ 
   \hline
\end{tabular}
\caption{Percentage (\%) of REML estimates being zero for each case}
\label{nega.M}
\end{table}

\subsection{Evaluation of confidence intervals}

Moreover, we evaluated the performance of six confidence intervals, as follows:
\begin{description} 
\item [(TDirect)] Our newly proposed transformed confidence interval (CI) based on the direct method,
\item [(TEB.YL)] Our newly proposed CI based on the explicit EB confidence interval from \cite{yoshimori2014second}, 
\item [(Boot)] The naive bootstrap CI, constructed using EB.YL and M1.YL, 
\item [(TEB.B)] The bootstrap implicit CI based on Equation \eqref{CCL}, using Li and Lahiri's adjusted REML estimator (\cite{li2010adjusted}), which accounts for the multiplier $\frac{1}{(1 + a_i D_i / 2)}$, 
\item [(pTEB.B)] Our newly proposed bootstrap implicit CI incorporating Li and Lahiri's adjusted REML estimator, which does not account for the multiplier $\frac{1}{(1 + a_i D_i / 2)}$,
\item [(Mpnaive)] The CI based on Equation \eqref{e1.250601}. 
\end{description}

In this simulation study, the bootstrap sample size was set to $10^3$.
Table \ref{CIcp} presented the coverage probability of the confidence intervals (CIs) for nominal coverage 95\%, demonstrating the superiority of most CIs, except for TDirect and Mpnaive that exhibited significant under-coverage issues. The TEB.YL showed over-coverage in all cases.
The average lengths of the four CIs were generally shorter than those of Direct and Mpnaive from Table \ref{CIL}. Notably, our TEB.B had the shortest average length.

\begin{table}[h!]
\centering
\begin{tabular}{rrrrrrr}
  \hline
($m,n_i$) & TDirect & TEB.YL & Boot & TEB.B & pTEB.B & Mpnaive \\ 
  \hline
(15,10) & 92.99 & 96.57 & 95.25 & 97.24 & 97.31 & 89.03 \\ 
  (15,100) & 94.68 & 96.41 & 92.19 & 95.36 & 95.42 & 94.29 \\ 
   \hline
   (50,10) & 93.02 & 98.79 & 95.00 & 98.64 & 98.76 & 89.15 \\ 
  (50,100) & 94.84 & 95.88 & 97.10 & 94.59 & 94.71 & 94.45 \\ 
   \hline

\end{tabular}
\caption{Coverage probability of the CIs for nominal coverage 95\%}
\label{CIcp}
\end{table}

\begin{table}[h!]
\centering
\begin{tabular}{rrrrrrr}
  \hline
($m,n_i$) & TDirect & TEB.YL & Boot & TEB.B & pTEB.B & Mpnaive \\ 
  \hline
(15,10) & 0.56 & 0.44 & 0.46 & 0.39 & 0.42 & 0.62 \\ 
  (15,100) & 0.21 & 0.17 & 0.20 & 0.15 & 0.15 & 0.22 \\ 
     \hline
  (50,10) & 0.55 & 0.36 & 0.53 & 0.33 & 0.35 & 0.62 \\ 
  (50,100) & 0.21 & 0.14 & 0.20 & 0.13 & 0.14 & 0.22 \\ 
   \hline
\end{tabular}
\caption{Average lengths of the CIs for each case} 
\label{CIL}
\end{table}

Furthermore, Table \ref{CI.com} presented the percentage of cases in which the confidence interval length exceeds that of our transformed direct confidence interval(TDirect). According to the table, the naive bootstrap and Mpnaive confidence intervals exhibited high percentages.

\begin{table}[h!]
\centering
\begin{tabular}{rrrrrr}
  \hline
($m,n_i$) & TEB.YL & Boot & TEB.B & pTEB.B & Mpnaive \\ 
  \hline
(15,10) & 0.00133 & 0.00533 & 0 & 0 & 99.99733 \\ 
  (15,100) & 0 & 0.40133 & 0 & 0 & 100.00000 \\ 
   \hline
(50,10)  & 0 & 0.14000 & 0 & 0 & 99.99720 \\ 
(50,100) & 0 & 0.57800 & 0 & 0 & 100.00000 \\ 
   \hline
\end{tabular}
\caption{Percentage (\%) of the larger length than that of TDirect CI} 
\label{CI.com}
\end{table}

\section{Data Analysis: Poverty mapping for each Japanese prefecture}

   Poverty has been recognized as a significant social issue in Japan, particularly over the past decade. To address this challenge, creating a reliable document that provides insights into poverty situations at a small-domain level is essential. \cite{tomuro2016} estimated poverty rates for each prefecture; however, this study relied solely on direct estimates. 
    In this section, we estimated the poverty rate for each of the 47 prefectures, considering data categorized by gender (2) and one age group (25–34) within single-person households (over 15 years old) using our proposed methods. 
    For this purpose, we utilized the employment status survey data from the 2017 official Japanese microdata and the poverty line defined by the minimum cost of living under public assistance, as in \cite{tomuro2016}. Such subsequent data was provided by the National Survey on Public Assistance Recipients. 
    For females, the AIC selected two auxiliary variables: (i) the graduation rate from higher education institutions for each prefecture, based on the 2010 Census data, and (ii) the number of single-person households for each prefecture, based on the 2015 Census data. Additionally, we obtained mean squared error (MSE) estimates and confidence intervals to compare our results with existing methods. The sample sizes range from 28 to 348 for females, and from 53 to 444 for males.

\subsection{Results}
Figure \ref{map} presented the resulting poverty map for each prefecture. The two top sub-figures displayed the direct (left) and EB (right) estimates for males, while the bottom sub-figures showed the direct (left) and EB (right) estimates for females.
Visually, differences between the two estimation methods are noticeable.  Particularly for females, the contrast appeared even more pronounced. Overall, the right sub-figures provided smoother estimates compared to their corresponding left maps.

\begin{figure}[h!]
    \centering
    \includegraphics[width=10cm]{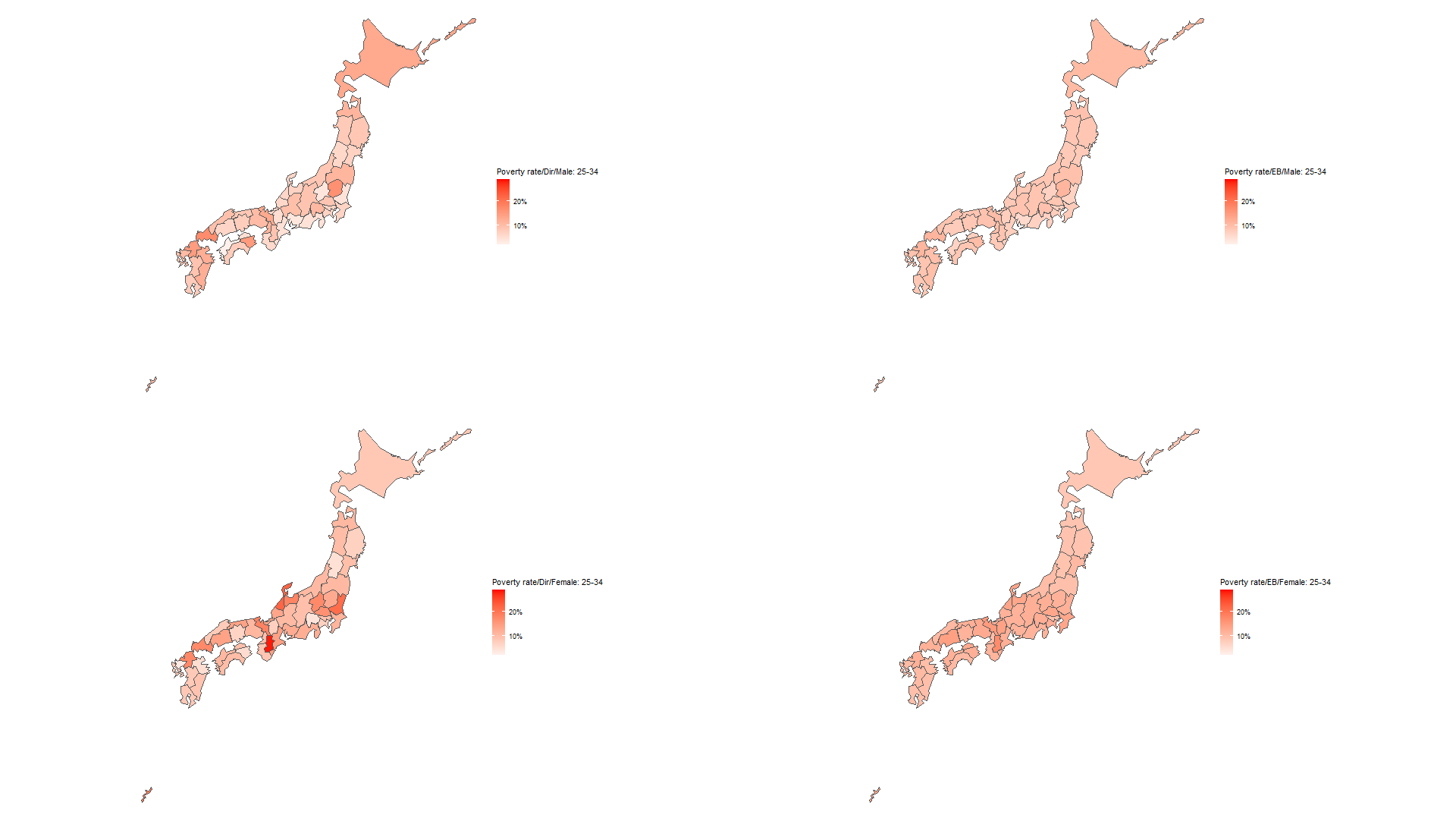}
    \caption{Poverty mapping for each prefecture (top left:male $\times$ direct estimates; top right: male$\times$ EB estimates;bottom left: Female$\times$  direct; bottom right: Female$\times$  EB)}
    \label{map}
\end{figure}

For more details, the four graphs in Figure \ref{Point_est} illustrated poverty estimates for direct, pEB.YL, and EB.YL in the top graphs, while the square root of MSE estimates (direct, M1.YL, Ms.YL, and pMs.YL) are displayed in the bottom section. These figures were arranged in descending order of sample size for each gender, male and female.
From the figures for females, we observed greater differences between the direct and two EB estimates, particularly as the sample size decreases, compared to the results for males. Notably, the discrepancy was especially pronounced in cases with smaller sample sizes.

\begin{figure}[h!]
    \centering
    \includegraphics[width=10cm,bb=0 0 1920 1082]{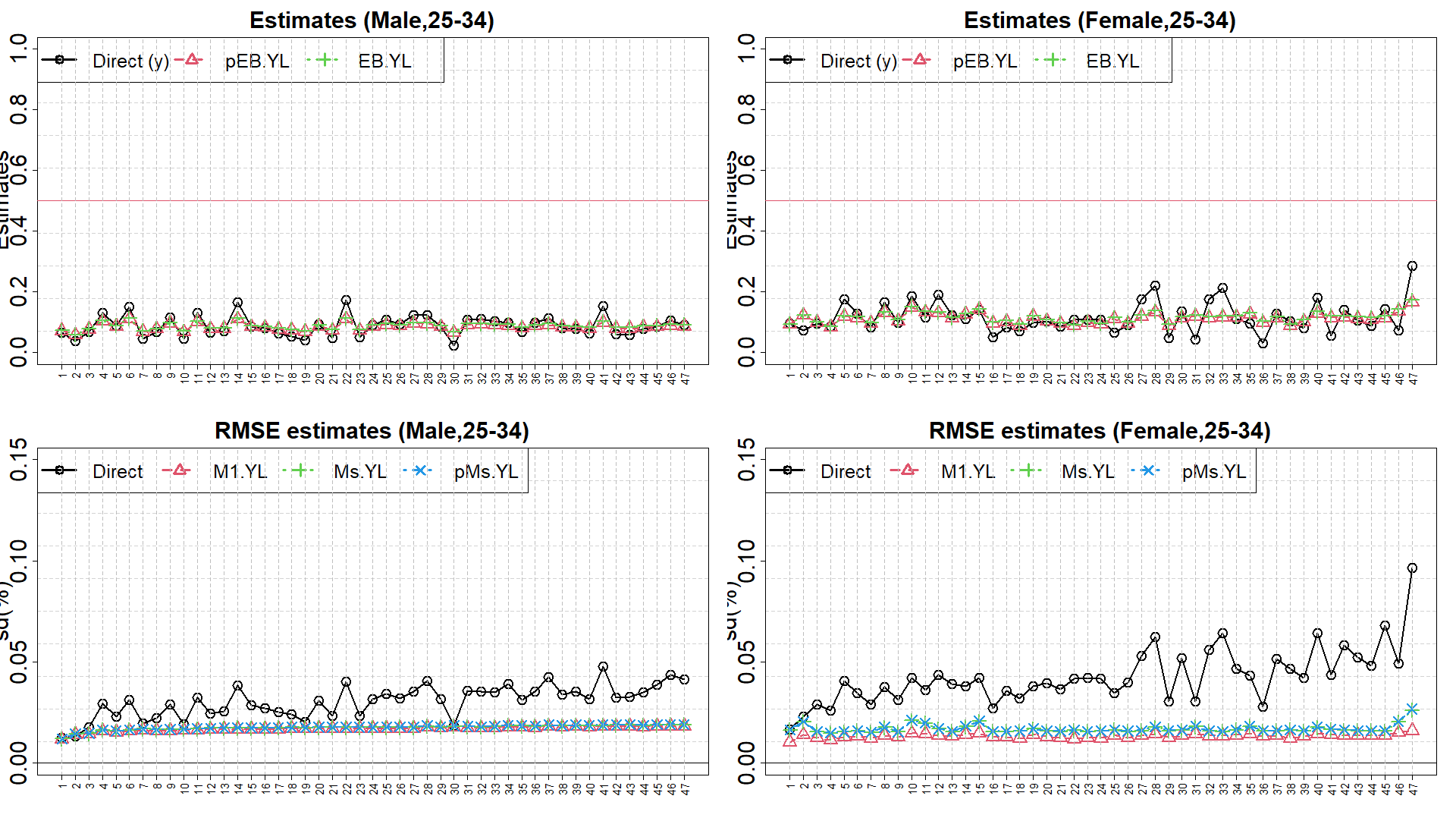}
    \caption{Estimates of poverty rate and these RMSE estimators for each prefecture}
    \label{Point_est}
\end{figure}

Moreover, we obtained the confidence intervals presented in Figure \ref{CIDA} for each prefecture, stratified by gender. The upper figure corresponded to the results for males, while the lower figure represented those for females.
We compared four types of confidence intervals: our transformed direct interval (TDirect), the explicit empirical Bayes (EB) interval (TEB.YL), the implicit bootstrap EB interval (TEB.B), and the naive interval (Mpnaive), for each prefecture.
The results indicated that TDirect and Mpnaive exhibit similar characteristics, whereas the two EB intervals (TEB.YL and TEB.B) also resemble each other. However, our two EB intervals are significantly shorter than others, particularly in cases with smaller sample sizes.

 \begin{figure}[h!]
    \centering
 \includegraphics[bb=0 0 1920 1082,width=10cm]{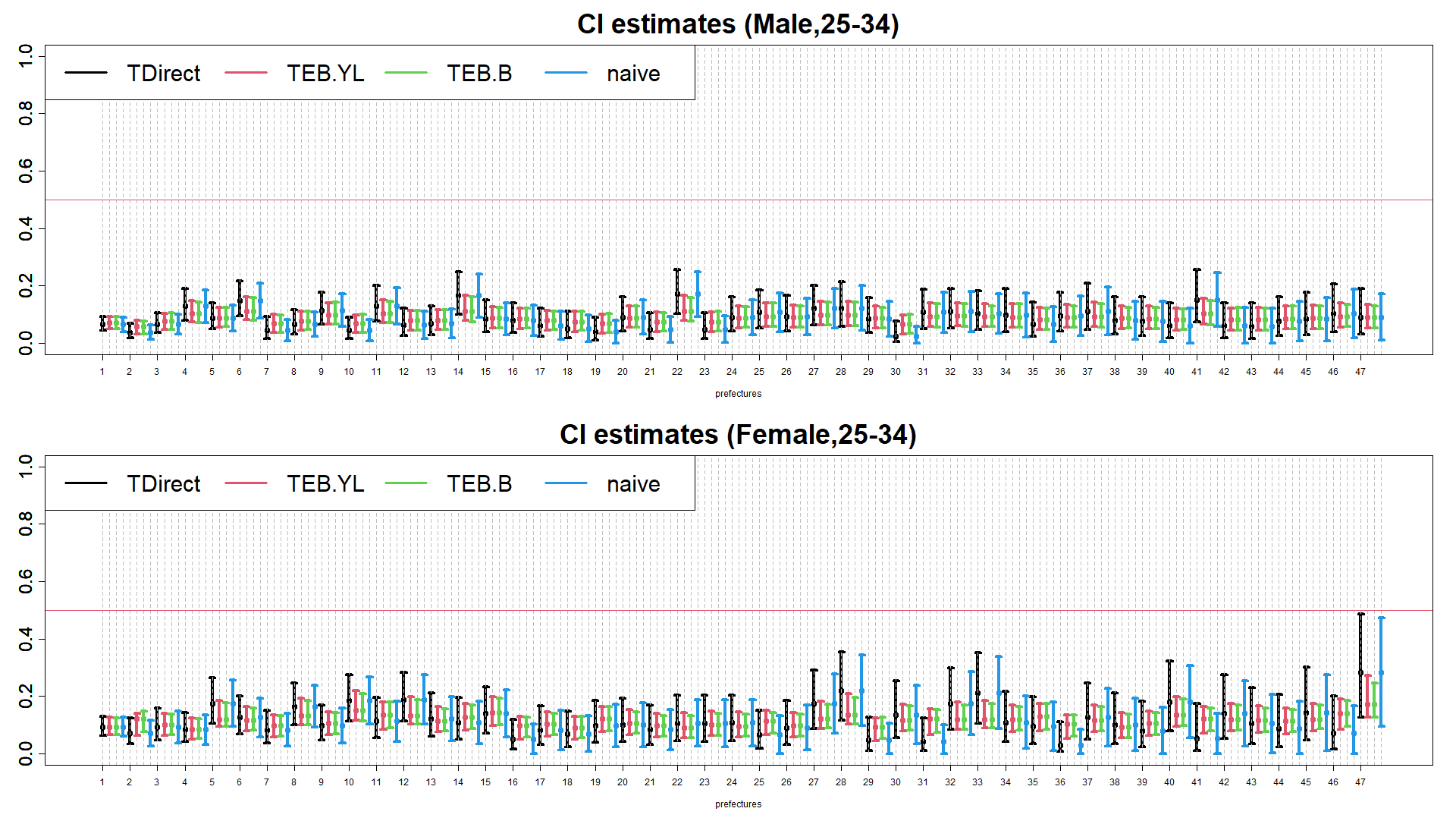}
 \caption{Confidence intervals for each prefecture}
 \label{CIDA}
 \end{figure}

Furthermore, we examined the ratio of the lengths of three confidence intervals relative to the transformed direct interval(TDirect). The results were presented in Figure \ref{LengthDA}.
The findings indicated that our two EB intervals achieve a substantial reduction in length compared to the transformed direct confidence interval.

\begin{figure}[h!]
    \centering
\includegraphics[bb=0 0 1920 1082,width=10cm]{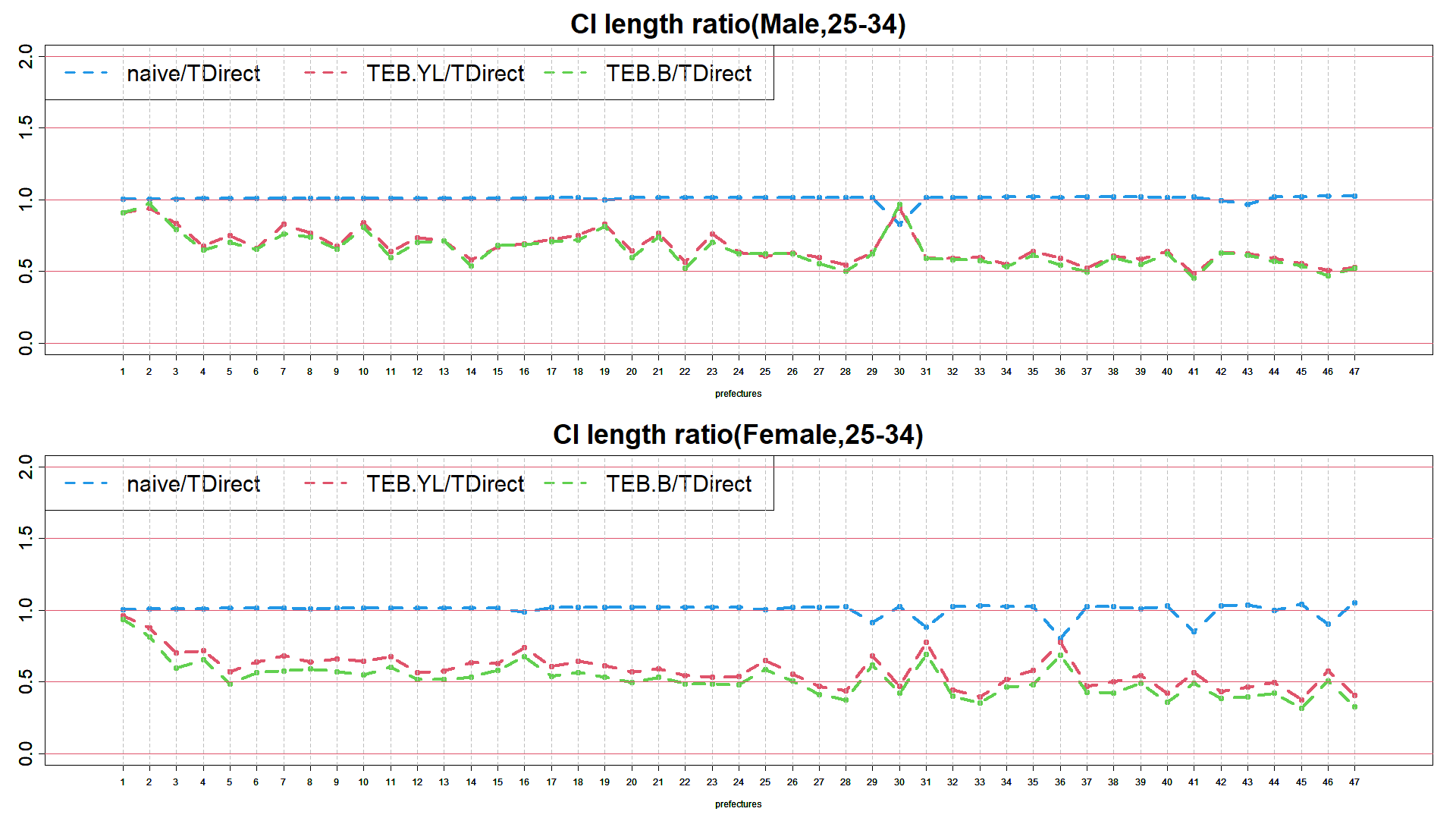}
\caption{Ratio of Length of each Confidence interval to that of TDirect}
\label{LengthDA}
\end{figure}

\section{Final Remarks}
The paper introduces an asymptotically bias-corrected variance stabilizing transformation in the context of area-level small area models. 
In addition, we consider a scenario where not just the number of small areas is large, but also some of these areas have large or moderately large sample sizes. Theoretical result was found in this setup, and are accompanied with data analysis and simulation. An important new research will be to extend this variance stabilizing idea to unit level models with or without moderately large samples in a given unit.

\section*{Acknowledgement}
This work was partially supported by the Japan Center for Economic Research and by JSPS KAKENHI Grant Number 22K01426. The analytical results presented here differ from the officially published statistics in Japan, as they were independently analyzed by the authors. We are grateful to the National Statistics Center, the Statistical Data Utilization Center, the on-site facilities at Okayama University, and ROIS-DS in Japan for allowing us to use microdata in secure facilities.

\appendix

\def\thesection{Appendix \Alph{section}}

\section{Proof of Theorem \ref{length.thm}}
The proof of Theorem \ref{length.thm} is given in this section. 

\begin{proof}
From Theorem \ref{th3}, the lengths are expressed as 
\begin{align*}
    L_{i,TD}&=\frac{1}{1+a_iD_i/2}\{g^{-1}(\hat I_{i,U}^D)-g^{-1}(\hat I_{i,L}^D)\},\\
    L_{i,TEB}&=\frac{1}{1+a_iD_i/2}\{g^{-1}(\hat I_{i,U}^{EB})-g^{-1}(\hat I_{i,L}^{EB})\},
\end{align*}
where $\hat I_{i,U}$ and $\hat I_{i,L}$ is the upper and lower limits of $\hat I_{i}$, respectively. 

\noindent Then it follows that
\begin{align}
    L_{i,TD}-L_{i,TEB}&=\frac{1}{1+a_iD_i/2}[g^{-1}(\hat I_{i,U}^D)-g^{-1}(\hat I_{i,L}^D)-\{g^{-1}(\hat I_{i,U}^{EB})-g^{-1}(\hat I_{i,L}^{EB})\}],\notag\\
    &=(1+O_p(n_i^{-1}))[g^{-1}(\hat I_{i,U}^D)-g^{-1}(\hat I_{i,L}^D)-\{g^{-1}(\hat I_{i,U}^{EB})-g^{-1}(\hat I_{i,L}^{EB})\}].\label{eq1250401}
\end{align}

\noindent We evaluated $g^{-1}(\hat I_{i,U}^D)-g^{-1}(\hat I_{i,L}^D)$ at first. 
\begin{align}
    g^{-1}(\hat I_{i,U}^D)&-g^{-1}(\hat I_{i,L}^D)=g^{-1}(\hat I_{i,U}^D)-g^{-1}(\hat I_{i,L}^{EB})-\{g^{-1}(\hat I_{i,L}^D)-g^{-1}(\hat I_{i,L}^{EB})\},\notag\\
    =&(\hat I_{i,U}^D-\hat I_{i,L}^D)\{g^{-1}(x)\}^{\prime}\Big |_{x=\hat I_{i,L}^{EB}}+\frac{1}{2}\{(\hat I_{i,U}^D-\hat I_{i,L}^{EB})^2-(\hat I_{i,L}^{D}-\hat I_{i,L}^{EB})^2\}\{g^{-1}(x)\}^{\prime \prime}\Big |_{x=\hat I_{i,L}^{EB}}\notag\\
     &+\frac{1}{6}\{(\hat I_{i,U}^D-\hat I_{i,L}^{EB})^3-(\hat I_{i,L}^{D}-\hat I_{i,L}^{EB})^3\}\{g^{-1}(x)\}^{\prime \prime \prime}\Big |_{x=x^*}\notag\\
     &+\frac{1}{6}(\hat I_{i,L}^{D}-\hat I_{i,L}^{EB})^3[\{g^{-1}(x)\}^{\prime \prime \prime}\Big |_{x=x^*}-\{g^{-1}(x)\}^{\prime \prime \prime}\Big |_{x=x^{**}}],\label{eq241018.1}
\end{align}
where $x^*$ and $x^{**}$ lies between $\hat I_{i,U}^{D}$ and $\hat I_{i,L}^{EB}$, $\hat I_{i,L}^{D}$ and $\hat I_{i,L}^{EB}$ respectively. 

\begin{align*}
    (\hat I_{i,U}^D-\hat I_{i,L}^{EB})^2-&(\hat I_{i,L}^{D}-\hat I_{i,L}^{EB})^2=(\hat I_{i,U}^D-\hat I_{i,L}^{D})^2+2(\hat I_{i,U}^D-\hat I_{i,L}^{D})(\hat I_{i,L}^{D}-\hat I_{i,L}^{EB}),\\
    =&4z_{\alpha/2}^2D_{i}+4z_{\alpha/2}\sqrt{D_{i}}\{g(y_i)-\hat \theta_i^{EB}-z_{\alpha/2}(\sqrt{D_i}-\sqrt{g_{1i}(\hat A)})\},\\
    =&4z_{\alpha/2}^2D_{i}+4z_{\alpha/2}\sqrt{D_{i}}\{g(y_i)-\hat \theta_i^{EB}\}+O_p(n_i^{-2}).
\end{align*}

\begin{align*}
    (\hat I_{i,U}^D-\hat I_{i,L}^{EB})^3-&(\hat I_{i,L}^{D}-\hat I_{i,L}^{EB})^3=(\hat I_{i,U}^D-\hat I_{i,L}^{D})^3+3(\hat I_{i,U}^D-\hat I_{i,L}^{D})^2(\hat I_{i,L}^{D}-\hat I_{i,L}^{EB})\\
    &+3(\hat I_{i,U}^D-\hat I_{i,L}^{D})(\hat I_{i,L}^{D}-\hat I_{i,L}^{EB})^2,\\
    =&8z_{\alpha/2}^3D_{i}^{3/2}+12z_{\alpha/2}^2D_{i}\{g(y_i)-\hat \theta_i^{EB}\}+6z_{\alpha/2}\sqrt{D_{i}}\{g(y_i)-\hat \theta_i^{EB}\}^2+O_p(n_i^{-2})
\end{align*}
In the above calculation, we used $\sqrt D_i-\sqrt{g_{1i}(\hat A)}=O_p(n_i^{-3/2})$ and $\hat I_{i,L}^{D}-\hat I_{i,L}^{EB}=g(y_i)-\hat \theta_i^{EB}+O_p(n_i^{-3/2})=O_p(n_i^{-1})$.

\begin{align*}
    \eqref{eq241018.1}=&2z_{\alpha/2}\sqrt{D_{i}}\{g^{-1}(x)\}^{\prime}\Big |_{x=\hat I_{i,L}^{EB}}
    +[2z_{\alpha/2}^2D_{i}+2z_{\alpha/2}\sqrt{D_{i}}\{g(y_i)-\hat \theta_i^{EB}\}]
    \{g^{-1}(x)\}^{\prime \prime}\Big |_{x=\hat I_{i,L}^{EB}}\\
    &+\left\{\frac{4}{3}z_{\alpha/2}^3D_{i}^{3/2}+2z_{\alpha/2}^2D_{i}\{g(y_i)-\hat \theta_i^{EB}\}+z_{\alpha/2}\sqrt{D_{i}}\{g(y_i)-\hat \theta_i^{EB}\}^2\right\}\{g^{-1}(x)\}^{\prime \prime \prime}\Big |_{x=x^*}\\
    &+O_p(n_i^{-2}),\\
    =&2z_{\alpha/2}\sqrt{D_{i}}\{g^{-1}(x)\}^{\prime}\Big |_{x=\hat I_{i,L}^{EB}}
    +2[z_{\alpha/2}^2D_{i}+z_{\alpha/2}\sqrt{D_{i}}\{g(y_i)-\hat \theta_i^{EB}\}
    \{g^{-1}(x)\}^{\prime \prime}\Big |_{x=\hat I_{i,L}^{EB}}\\
    &+\frac{4}{3}z_{\alpha/2}^3D_{i}^{3/2}\{g^{-1}(x)\}^{\prime \prime \prime}\Big |_{x=x^*}+O_p(n_i^{-2}).
\end{align*}
Note that $g(y_i)-\hat \theta_i^{EB}=O_p(n_i^{-1}).$

Next, using a similar way, it follows that 
\begin{align}
    g^{-1}(\hat I_{i,U}^{EB})&-g^{-1}(\hat I_{i,L}^{EB})
    =(\hat I_{i,U}^{EB}-\hat I_{i,L}^{EB})\{g^{-1}(x)\}^{\prime}\Big |_{x=\hat I_{i,L}^{EB}}+\frac{1}{2}(\hat I_{i,U}^{EB}-\hat I_{i,L}^{EB})^2\{g^{-1}(x)\}^{\prime \prime}\Big |_{x=\hat I_{i,L}^{EB}}\notag\\
    &+\frac{1}{6}(\hat I_{i,U}^{EB}-\hat I_{i,L}^{EB})^3\{g^{-1}(x)\}^{\prime \prime \prime}\Big |_{x=x^{***}},\notag\\
    =&2z_{\alpha/2}\sqrt{g_{1i}(\hat A)}\{g^{-1}(x)\}^{\prime}\Big |_{x=\hat I_{i,L}^{EB}}
    +2z_{\alpha/2}^2g_{1i}(\hat A)\{g^{-1}(x)\}^{\prime \prime}\Big |_{x=\hat I_{i,L}^{EB}}\notag\\
    &+\frac{4}{3}z_{\alpha/2}^3g_{1i}(\hat A)^{3/2}\{g^{-1}(x)\}^{\prime \prime \prime}\Big |_{x=x^{***}},    \label{eq241018.2}
\end{align}
where $x^{***}$ lies between $\hat I_{i,U}^{EB}$ and $\hat I_{i,L}^{EB}$. 

Hence, we obtain using \eqref{eq1250401}-\eqref{eq241018.2},
\begin{align*}
L_{i,TD}-L_{i,TEB}=&(1+O_p(n_i^{-1}))[g^{-1}(\hat I_{i,U}^D)-g^{-1}(\hat I_{i,L}^D)-\{g^{-1}(\hat I_{i,U}^{EB})-g^{-1}(\hat I_{i,L}^{EB})\}],\\
=&2z_{\alpha/2}(\sqrt{D_{i}}-\sqrt{g_{1i}(\hat A)})\{g^{-1}(x)\}^{\prime}\Big |_{x=\hat I_{i,L}^{EB}}\\
    &+2z_{\alpha/2}^2\{D_{i}-g_{1i}(\hat A)\}\{g^{-1}(x)\}^{\prime \prime}\Big |_{x=\hat I_{i,L}^{EB}}\\
    &+2z_{\alpha/2}\sqrt{D_{i}}\{g(y_i)-\hat \theta_i^{EB}\}
    \{g^{-1}(x)\}^{\prime \prime}\Big |_{x=\hat I_{i,L}^{EB}}\\
    &+\frac{4}{3}z_{\alpha/2}^3\{D_{i}^{3/2}-g_{1i}(\hat A)^{3/2})\}\{g^{-1}(x)\}^{\prime \prime \prime}\Big |_{x=x^*}\\
    &-\frac{4}{3}z_{\alpha/2}^3g_{1i}(\hat A)^{3/2}\left\{\{g^{-1}(x)\}^{\prime \prime \prime}\Big |_{x=x^{***}}-\{g^{-1}(x)\}^{\prime \prime \prime}\Big |_{x=x^{*}}
    \right\}+O_p(n_i^{-2}),\\
=&2z_{\alpha/2}\Big[(\sqrt{D_{i}}-\sqrt{g_{1i}(\hat A)})\{g^{-1}(x)\}^{\prime}\Big |_{x=\hat \theta_i^{EB}}\\
    &+\sqrt{D_{i}}\{g(y_i)-\hat \theta_i^{EB}\}
    \{g^{-1}(x)\}^{\prime \prime}\Big |_{x=\hat \theta_i^{EB}}\Big]+O_p(n_i^{-2}),
 \end{align*}
The last equation obtained by noting that $x^*-x^{***}=O_p(n_i^{-1})$ as follows from their definitions. The result that $\hat I_{i,L}^{EB}=\hat \theta_i^{EB}+O_p(n_i^{-1/2})$ is also used for the last equation. 

\end{proof}

\bibliography{PMP}
\bibliographystyle{acm} 
\bibliographystyle{abbrv}

\end{document}